%	Energy Efficient Adversarial  Routing in Shared Channels

%	Chlebus & Hradovich & Jurdzinski & Klonowski & Kowalski

\documentclass[11pt]{article}

\usepackage{amsfonts,amssymb,amsmath,latexsym,ae,aecompl}

\usepackage{xcolor}

\textheight           	9in
\textwidth                	6.5in
\oddsidemargin         0pt
\evensidemargin     	0pt
\topmargin            	0pt
\marginparwidth   	0pt
\marginparsep         	0pt
\headheight          	0pt
\headsep           	0pt

\newcommand{\FF}{\vspace*{\medskipamount}}
\newcommand{\FFF}{\vspace*{\bigskipamount}}

\newcommand{\BBB}{\vspace*{-\bigskipamount}}

\newcommand{\cO}{\mathcal{O}}

\newcommand{\Paragraph}[1]{\BBB\paragraph{#1}}
\newcommand{\remove}[1]{}

\newlength{\pagewidth}
\setlength{\pagewidth}{\textwidth}
\addtolength{\pagewidth}{-6em}

\newlength{\captionwidth}
\setlength{\captionwidth}{\textwidth}
\addtolength{\captionwidth}{-3em}

\newcommand{\qed}{\hfill $\square$ \smallbreak}
\newenvironment{proof}{\noindent\textbf{Proof:}}{\qed}

\newcommand{\algoname}[1]{\textnormal{\textsc{#1}}}

\newtheorem{corollary}{Corollary}

\newtheorem{lemma}{Lemma}

\newtheorem{theorem}{Theorem}

\begin{document}

\baselineskip           	3ex
\parskip                	1ex

\title{		Energy Efficient Adversarial Routing in Shared Channels\FFF\FFF\FFF}

\author{	Bogdan S. Chlebus  	\footnotemark[1]   	
		\and
		Elijah Hradovich \footnotemark[2]
		\and
		Tomasz Jurdzi\'nski \footnotemark[3]
		\and
		Marek Klonowski \footnotemark[2]
		\and
		Dariusz R. Kowalski \footnotemark[4]\FFF}

\footnotetext[1]{Department of Computer Science and Engineering, University of Colorado Denver, Denver, Colorado, USA.}

\footnotetext[2]{Katedra Informatyki, Politechnika Wroc\l awska,  Wroc\l aw, Poland.}

\footnotetext[3]{Instytut Informatyki, Uniwersytet Wroc\l awski, Wroc\l aw, Poland.}

\footnotetext[4]{Department of Computer Science, University of Liverpool, Liverpool, United Kingdom.}

\date{}

\maketitle

\begin{abstract}
We investigate routing on networks modeled as multiple access channels, when packets are injected continually.
There is an energy cap understood as a bound on the number of stations that can be switched on simultaneously.
Each packet is injected into some station and needs to be delivered to its destination station via the channel.
A station has to be switched on in order to receive a packet when it is heard on the channel.
Each station manages when it is switched on and off by way of a programmable wakeup mechanism, which is scheduled by a routing algorithm. 
Packet injection is governed by adversarial models  that determine  upper bounds on injection rates and burstiness.
We develop deterministic distributed routing algorithms and assess their performance  in the worst-case sense.
An algorithm knows the number of stations but does not know the adversary.
One of the algorithms maintains bounded queues for the maximum injection rate~$1$ subject only  to the energy cap~$3$.
This energy cap is provably optimal, in that obtaining the same throughput with the energy cap~$2$ is impossible.
We give algorithms subject to the minimum energy cap~$2$ that have latency polynomial in the total number of stations~$n$ for each fixed adversary of injection rate less than~$1$.
An algorithm is $k$-energy-oblivious if at most $k$ stations are switched on in a round and for each station the rounds when it will be switched on are determined in advance.
We give a $k$-energy-oblivious algorithm that has packet delay $\cO(n)$ for adversaries of injection rates less than $\frac{k-1}{n-1}$, and show that there is no $k$-energy-oblivious stable algorithm against adversaries with injection rates greater than~$\frac{k}{n}$.
An algorithm routes directly when each packet makes only one hop from the station into which it is injected straight to its destination.
We give a $k$-energy-oblivious algorithm routing directly that has latency $\cO\bigl(\frac{n^2}{k}\bigr)$ for adversaries of sufficiently small injection rates that are $\cO\bigl(\frac{k^2}{n^2}\bigr)$.  
We develop a $k$-energy-oblivious algorithm routing directly  that is stable   for injection rate $\frac{k(k-1)}{n(n-1)}$, and show that no $k$-energy-oblivious algorithm routing directly can be stable against adversaries with injection rates greater than~$\frac{k(k-1)}{n(n-1)}$.

\vfill

\noindent
\textbf{Key words:}
multiple access channel, energy cap, adversarial packet injection, routing, stability, latency
\end{abstract}

\vfill

~

\thispagestyle{empty}
\setcounter{page}{0}

\newpage

\section{Introduction}

\label{sec:introduction}

Energy efficiency has become a critical factor in the design of  large scale distributed and networked systems~\cite{BianzinoCRR12,BollaBDC11,FangYHLL15,JonesSAJ01,OrgerieAL13}.
Ethernet is a popular communication technology used to implement local area networks.
Its energy-efficient standard~\cite{ChristensenRNBMM10} uses adaptive power modes that adjust to the amount of traffic.
Multiple access channels are an abstraction of wireline Ethernet channels.
This work considers communication on such channels that are subject to energy constraints.
The underlying motivation is to design distributed communication algorithms that are efficient with  respect to both packet latency and energy use.

We study dynamic  routing on multiple access channels when packets are injected continually.
A channel is shared by a number of stations.
Without energy considerations, the stations  stay ready all the time to perform their communication tasks.
We add an amount of energy available per round as additional component  of the system.
Formally, there is an upper bound on the number of stations attached to the channel that can be switched on simultaneously, which is interpreted as a cap on the available energy.
Stations that are switched off cannot transmit nor receive messages from the channel, but they can have packets injected into them.

A packet gets injected into some station before it is transmitted on the channel.
A station storing injected packets will strive to transmit them eventually such that they are heard on the channel.
A packet  includes the name of a station it is addressed to. 
If a packet is heard while its destination station is switched off, then handling the packet might be taken over by some other station that would act as a relay for the packet.
A packet can be repeatedly handed over  among the stations such that it hops through a sequence of stations in a store-and-forward manner until eventually it is heard by its destination station, which consumes the packet.

Packet injection is represented by adversarial models that determine upper bounds on injection rates and burstiness~\cite{AndrewsAFLLK-JACM01,BorodinKRSW-JACM01}.
There is no stochastic component in the specification of how packets are injected.
We develop routing algorithms that are distributed and deterministic.
Their performance is measured by the worst-case latency and bounds on the number of queued packets.
Performance bounds depend on the number of stations and the parameters of an adversary controlling packet injections.
Routing algorithms know the size of the system but do not know the parameters of traffic generation.
We consider the classes of routing algorithms defined by additional restrictions.
These restricted algorithms may, for example,  not use relay stations, or not use control bits in messages, or have the on-off status for each station scheduled in advance.

\Paragraph{A summary of the results.}

We develop deterministic distributed algorithms routing adversarial traffic on the multiple access channels and assess their efficiency  in the worst-case sense, where performance bounds depend on the known number of stations $n$ and an unknown adversary.
One of the algorithms maintains bounded queues for the maximum injection rate~$1$ subject only to the energy cap~$3$.
This energy cap is provably optimal, in that obtaining the same throughput with the energy cap~$2$ is impossible.
Algorithms that have bounded latency for each fixed adversary of injection rate less than~$1$ are said to be universal.
We give universal algorithms subject to the minimum energy cap~$2$ that have the latency polynomial in the number of stations~$n$.
One of these algorithms uses control bits in messages and has latency $\cO(n^2)$ and another has stations transmit plain packets only and attains latency $\cO(n^3\log^2 n)$.
An algorithm is $k$-energy-oblivious if at most $k$ stations are switched on in a round and for each station the rounds it is switched on are determined in advance.
We give a $k$-energy-oblivious algorithm that has latency $\cO(n)$ for  adversaries of injection rates less than~$\frac{k-1}{n-1}$ and show that there is no $k$-energy-oblivious stable algorithm against adversaries with injection rate greater than~$\frac{k}{n}$.
An algorithm routes directly when it does not utilize relay stations, in that each packet makes only one hop  straight to its destination from the station it is injected into.
We give a $k$-energy-oblivious algorithm routing directly that has latency $\cO \bigl(\frac{n^2}{k}\bigr)$ for adversaries of sufficiently small injection rates that are $\cO\bigl(\frac{k^2}{n^2}\bigr)$.
We develop a $k$-energy-oblivious algorithm routing directly  that is stable   for injection rate $\frac{k(k-1)}{n(n-1)}$ and show that no $k$-energy-oblivious algorithm routing directly can be stable against adversaries with injection rates greater than~$\frac{k(k-1)}{n(n-1)}$.
All the performance bounds of algorithms and impossibility results are tabulated in Table~\ref{tbl:table}.

\newcommand{\RB}{\raisebox{2.7ex}{}}
\newcommand{\LB}{\raisebox{-1.6ex}{}}

\begin{table}  
\begin{center}
\begin{tabular}{ll|cccc|c}
\hline
\RB\LB{}Algorithm/Impos. &Sec.& Injection & Latency &Queues &Cap & Properties \\
\hline
\hline
\RB\emph{Max throughput:} &&&&&\\
\RB \LB\algoname{Orchestra} &\ref{subsect:algorithm-max-throughput}& $\rho=1$ & $\infty$ & $2n^3+\beta$ &3 &  NObl-Gen-Dir \\
\raisebox{-1ex}{}Impossibility&\ref{subsect:impossibility-throughput}& $\rho=1$ & &Stable &$2$   &      \\
\hline
\RB\emph{Universality:} &&&&&\\
\raisebox{3.2ex}{}\LB\algoname{Count-Hop} &\ref{subsec:general-universal}& $\rho<1$ & $\frac{2(n^2+\beta)}{1-\rho}$ & & 2 & NObl-Gen-Dir\\
\LB\algoname{Adjust-Window} &\ref{subsect:poly}& $\rho<1$ & $\frac{18n^3\log^2n+2\beta}{1-\rho}$ && 2 & NObl-PP-Ind \\
\hline
\RB\emph{Oblivious indirect:} &&&&&\\
\RB\LB\algoname{$k$-Cycle} &\ref{sect:oblivious-indirect}& $\rho< \frac{k-1}{n-1}$ & $(32+\beta)\cdot n$ & & $k$  & Obl-PP-Ind\\
\LB{}Impossibility &\ref{sect:oblivious-indirect} & $\rho> \frac{k}{n}$ &   & Stable &   $k$ & Obl   \\
\hline
\RB\emph{Oblivious direct:} &&&&&\\
\raisebox{3.2ex}{}\raisebox{-2ex}{}\algoname{$k$-Clique} &\ref{sect:oblivious-direct}& $\rho\le \frac{k^2}{2n(2n-k)}$ & $8\frac{n^2}{k}(1+\frac{\beta}{2k})$ && $k$ & Obl-PP-Dir\\
\raisebox{-2ex}{}\algoname{$k$-Subsets}&\ref{sect:oblivious-direct}  & $\rho=\frac{k(k-1)}{n(n-1)}$ & $\infty$ & $2\binom{n}{k} (n^2+\beta)$ &  $k$ & Obl-Gen-Dir  \\
\raisebox{-2ex}{}Impossibility &\ref{sect:oblivious-direct} & $\rho>\frac{k(k-1)}{n(n-1)}$ & & Stable &  $k$ & Obl-Dir   \\
\hline	
\end{tabular}

\parbox{\captionwidth}{\FF\caption{\label{tbl:table} 
A summary of the performance bounds of algorithms and impossibility results, broken into four main sub-topics.
The adversary is of type $(\rho,\beta)$, where $\rho$ is the injection rate and $\beta$ is the burstiness coefficient.
The abbreviations used to specify properties or algorithms or routing are as follows: Obl = oblivious, NObl = non-oblivious, Gen = general, PP = plain-packet, Dir = direct, Ind = indirect.
The impossibilities are of existence of stable routing algorithms subject to the given injection rate, energy cap, and properties of algorithms and routing.
A bound on latency is also a bound on the number of queued packets.
Latency $\infty$ means that it is possible for some packets  never to be delivered.
The parameters $n$ and $k$ can be part of code of algorithms.
}}
\end{center}
\end{table}

\Paragraph{Previous and related work.}

The surveys by Albers~\cite{Albers10} and Irani and Pruhs~\cite{IraniP05} discuss  algorithms for managing energy usage.
Routing and other communication primitives subject to energy constraints have been studied extensively, in particular by Chabarek et al.~\cite{ChabarekSBETW08} and Andrews et al.~\cite{AndrewsAZZ13, AndrewsFZZ10}.
Reducing network energy consumption via sleeping and rate-adaptation was addressed by Nedevschi et al.~\cite{NedevschiPIRW08}.
Bergamo et al.~\cite{BergamoGTMMZ04} proposed distributed power control to improve energy efficiency of routing algorithms in ad hoc networks.

Jurdzi\'nski et al.~\cite{JurdzinskiKZ02} studied the problem of counting the number of active nodes in a single-hop radio network with the goal to simultaneously optimize the running time and the energy spent by each node, which is understood as the length of time interval when a node is awake.
Kardas et al.~\cite{KardasKP13} studied energy-efficient leader election in single-hop radio networks.
Klonowski et al.~\cite{KlonowskiKZ12} considered energy-efficient ways to alert a single hop network of weak devices.
Efficiency of broadcasting in ad-hoc wireless networks subject to the number of transmissions a node can perform, interpreted as energy constraint, was studied in ~\cite{GasieniecKKPS08,KantorP16,KarmakarKPS17}. 
Chang et al.~\cite{ChangDHHLP18} as well as  Klonowski and Paj\k{a}k~\cite{KlonowskiP18}  studied tradeoffs between the time and energy for broadcasting in radio networks.
Chang et al.~\cite{ChangKPWZ17} studied the energy complexity of leader election and approximate counting in models of wireless networks.
Herlich and Karl~\cite{HerlichK} investigated saving power in mobile access networks when  base stations cooperate to be active or passive in extending their range.

Local area data networks implemented by the Ethernet are typically under-utilized; schemes for shutting down network interfaces for energy conservation when using the Ethernet were proposed by Gupta and Singh~\cite{GuptaS07}.
Gunaratne et al.~\cite{GunaratneCNS08} investigated  policies to adaptively vary the link data rate in response to the demand imposed on the link rate as a means of reducing the energy consumption in Ethernet installations.

Adversarial communication in multiple-access channels was studied in~\cite{BenderFHKL-SPAA05,ChlebusKR-TALG12,AnantharamuC15,AnantharamuCR-TCS17}, among others.
Bender et al.~\cite{BenderKPY16} considered the goal of minimizing the number of channel accesses for a constant throughput in the static case of broadcasting when a set of packets is given in advance.
A broadcast algorithm in multiple access channels, that is stable for injection rate~$1$  when the stations are switched  on all the time, was given by Chlebus et al.~\cite{ChlebusKR-DC09}.

\section{Technical Preliminaries}

\label{sec:technical_preliminaries}

We consider multiple access channels as a model of communication networks.
There are~$n$ stations attached to the channel; we refer to the number~$n$ as the \emph{size} of the system.
Each station has a name assigned to it, which is a unique integer in the interval $[0,n-1]$.
The network operates in a synchronous manner, in that an execution of a communication algorithm is partitioned into rounds.
All the stations begin an execution of an algorithm in the same round.

\Paragraph{Messages.}

Stations may transmit messages on the channel.
The duration of a round and the size of a message are mutually scaled  such that it takes a round to transmit one message.
We say that a station \emph{hears} a transmitted message when the station is switched on and it receives the transmitted message successfully as feedback from the channel.
If exactly one station transmits a message in a round then all the stations that are switched on in this round hear the message, including the transmitting station.
When at least two stations transmit their messages in the same round then no station can hear any message in this round, including the transmitting stations.
A round during which no message is transmitted is called \emph{silent}.

A station can be in one of two possible modes in a round.
When a station is \emph{switched on}  then this means that it is fully operational, in being able to transmit a message and receive feedback from the channel.
When a station is \emph{switched off} then this means that it cannot transmit nor receive any communication from the channel.
Each station is autonomous in when it is on and when it is off.
In a round in which a station is switched on, the station can set its \emph{timer} to any positive integer $c$, which results in the station spending the next $c$ rounds in the off-mode and returning to the on-mode immediately afterwards.

It is a natural goal to design ways to save on the expenditure of energy while the  communication system would maintain its functionality.
We associate energy cost as an attribute of the communication infrastructure in the following manner.
All the stations are connected to an external energy source, which has some output capacity and cannot provide more energy per round than this cap.
Specifically, the following is assumed: (1)~it costs one energy unit to keep a station switched on in a round, and (2)~it costs a negligible amount of energy to keep a station switched off in a round.
When representing the whole system's expenditure of energy in a given round, we make it equal to the number of stations that spend this round switched on.
The upper bound on the number of stations that can be  switched on simultaneously in a round  is the \emph{energy cap} of the system.
A multiple-access channel system is determined by the total number of available stations and the energy cap.
The energy cap~$2$ is minimum to make the tasks of point-to-point communication feasible in principle, since at least one transmitter and one receiver need to be switched on in a round.

\Paragraph{Dynamic packet generation.}

A packet~$p=(d,c)$ consists of its \emph{destination address~$d$} and its \emph{content~$c$}.
A destination is an integer in $[0,n-1]$ interpreted as a name of the station to which the packet needs to be delivered.
A packet's contents is the information that the packet carries, which does not effect how the packet is handled.

An adversarial model imposes quantitative constraints on how packets get generated and  injected.
An adversary is determined by its \emph{type $(\rho,\beta)$}, where $\rho$ and $\beta$ are numbers such that $0<\rho\le 1$ and $\beta\ge 1$.
In each continuous time interval of length~$t$, the adversary can generate and inject at most $\rho\cdot t + \beta$ packets.
The parameter $\rho$ is interpreted as an \emph{injection rate}.
The maximum number of packets that can be generated in a
single round is the adversary's \emph{burstiness}, which is~$\lfloor \beta+\rho \rfloor$; we call $\beta$ the \emph{burstiness coefficient}.
When referring to an upper bound $\rho\cdot t + \beta$ on the number of injected packets, we say that $\rho t$ is the \emph{injection-rate component} and $\beta$ is the \emph{burstiness component} of the bound.
This adversarial model of packet injection is called \emph{leaky bucket}; it was used before to model traffic in shared channels, in particular in~\cite{ChlebusKR-DC09,AnantharamuCKR-JCSS19}.

An adversary is restricted only by the number of packets it can generate in a time interval, as determined by its length.
Once a packet is generated, the adversary injects it immediately into some station. 
Packets may be injected into any station, regardless of whether the station is switched on or off.
Packets injected into a station can be stored in the station's private memory, which is called this station's \emph{queue}. 
A station can transmit its queued packets in arbitrary order.
A station can scan its queue and access any packet in negligible time.

\Paragraph{Routing packets.}

Each injected packet needs to be delivered to its destination station.
We say that a packet $p=(d,c)$ gets \emph{delivered} in a round~$t$ when the following occurs: (1) a message containing packet~$p$ is transmitted  in round~$t$ and is heard on the channel, (2) the destination station~$d$ is switched on in round~$t$.
If a packet gets delivered then it is ``consumed'' by the destination station and disappears from the system.
A packet may be transmitted and heard on the channel a number of times, which results in the packet hopping from station to station in a store-and-forward manner.
If a station transmits a packet, which is then heard on the channel, then the packet may be removed from the queue of the transmitting station.
If a message with a packet is heard on the channel but the packet is not delivered in this round then some station may \emph{adopt} the packet by adding it to the queue; such a new station handling the packet becomes a \emph{relay} for the packet and treats it as if it were injected directly by the adversary.

The task of \emph{routing} is defined as follows: while packets are continually generated and injected into the system, stations transmit them such  that they are eventually delivered. 
The total number of packets that are queued in a round is referred to as the \emph{queue size} in this round.
The \emph{delay of a packet~$p$} is defined as the difference $t_2-t_1$ between the round~$t_2$ in which packet~$p$ gets delivered and  the round~$t_1$ in which packet~$p$ was injected.

\Paragraph{Routing algorithms.}

Routing is performed by distributed algorithms that are executed by all the stations concurrently.
Correctness of a routing algorithm means that each injected packet is eventually delivered to its destination and a delivery occurs exactly once for each packet.

A station switched-on  in a round either transmits a message or senses the channel (listens to it) in this  round.
A message  consists of at most one packet and  a string of control bits.
The bits encoding packet's destination address are not considered as control bits.
Whether control bits are included in messages is a feature of algorithms.
Algorithms that have a message consist of only a packet without any control bits are called \emph{plain-packet} ones, they make a subclass of \emph{general} routing algorithms.
A station executing a plain-packet algorithm cannot send a message when it does not have a packet to route, since a message has to include a packet and nothing else. 
A station executing a general-routing algorithm may transmit a message without any packet but with control bits only.

The destination address of a packet is just a station's name, so it is represented by $\cO(\log n)$ bits.
We consider only routing algorithms that use the conservative amount of $\cO(\log n)$ control bits per message.
This restriction on the number of control bits transmitted per round makes  coordinating actions among the stations reasonably costly, as reflected in the number of messages, and ultimately packet delays and size of queues.

Routing algorithms that do not use relay stations are said to \emph{route directly}, and otherwise they are said to \emph{route indirectly}.
Algorithms that route directly make each packet hop only once, from the station into which the packet got injected, straight to the destination.

A routing algorithm is called \emph{energy oblivious} when it determines in advance, prior to the start of an execution, for each station~$i$ and each round $t$, whether station~$i$ is on or off in  round~$t$.
An energy-oblivious algorithm executed on a channel subject to an energy cap of at most~$k$ is called \emph{$k$-energy-oblivious}.

The \emph{queue size} measure, of an execution of a routing algorithm, is defined as a maximum number of queued packets in a round of this execution.
The \emph{latency} measure, of an execution of a routing algorithm, is defined as the maximum packet delay occurring in the execution.
Both the queue size and latency are natural performance metrics of routing algorithms and are represented as functions of the size of the system and the type of an adversary.
If the latency of a routing algorithm is bounded then queues are bounded as well, since a queue's size at a station is always a lower bound on the delay of some packet handled by this station.
We say that a routing algorithm is \emph{stable}, against a class of adversaries, if the queue size is bounded, for a given number of stations and an adversary in this class.
The \emph{throughput} of an algorithm is the maximum injection rate for which it is stable, assuming such a rate exists.
Throughput is always at most~$1$ since at most one packet can be heard in a  round.

A routing algorithm has a \emph{universally bounded latency} when latency is bounded against each adversary of injection rate less than~$1$; we call such algorithms \emph{universal}.
Bounds on the latency of a routing algorithm against a specific adversary involve the size of the system $n$ and the type of an adversary $(\rho,\beta)$.

\Paragraph{Knowledge.}

We say that a property of a system is \emph{known} when it can be used in codes of algorithms.
It is assumed throughout that the size of the system~$n$ and the energy cap $k < n$ are known, but the adversary's type $(\rho,\beta)$ is not.
Algorithms may have their correctness and performance bounds depend on the magnitudes of the unknown parameters of the communication environment.  
For example, an algorithm may be stable or have bounded latency for sufficiently small injection rates.

\section{Maximizing Throughput}

\label{sec:maximizing-throughput}

We present a direct-routing algorithm stable for injection rate~$1$.
This is the maximum throughput possible on multiple access channels.
The algorithm requires energy cap to be at least~$3$.
We show  that the number $3$ is best achievable in this sense by proving the impossibility of attaining throughput~$1$ with energy cap~$2$.

\subsection{An algorithm achieving maximum throughput}

\label{subsect:algorithm-max-throughput}

An algorithm stable for injection rate~$1$ which we give is called \algoname{Orchestra}.
It schedules at most three stations to be simultaneously switched on  at any round, with at most one of them transmitting.
The algorithm builds on the paradigms developed in~\cite{ChlebusKR-DC09}, which gave a broadcast algorithm with throughput~$1$ for multiple-access channels without energy caps.

We call a group of $n-1$ consecutive rounds of an execution a \emph{season} if the last round $t$ of the group satisfies $t \equiv 0 \pmod{n-1}$.
For each season, there is a unique station associated with it called a \emph{conductor}.
Stations that are different from a conductor  are called \emph{musicians}.
A conductor for a season transmits a message in every round  of  this season, so there are no silent rounds.
A round when a message heard on the channel does not include a packet but only control bits is called \emph{light}.

Every station keeps an ordered list of all the stations. 
These lists are the same in every station at the beginning of a season; at such a moment they represent one list, which we call the \emph{baton list}.
Initially, the baton list consists of all the stations ordered by their names.
Stations assume the role of conductors in their order on the baton list.
The first station on the list is assigned to serve as a conductor for the first season.

The positions of stations on the baton list are understood as follows: the front entry of the list is considered as the first station on the list, then the consecutive stations have their positions increased by one, and the tail entry occupies the last $n$th position.
In particular, if a station at position~$i$ moves to become the head of the baton list, then this stations acquires position~$1$ while each station at the original position $j<i$, which means closer to the front than~$i$, gets its position incremented to~$j+1$, so that its distance from the head of the list increases.

The process of assigning conductors to seasons can be visualized as passing a virtual baton from station to station, such that a station holding the baton  is  a conductor.
When a season ends then the baton is typically passed on to the next station on the baton list.
The order determined by the list is understood in a cyclic sense, in that the first station becomes a conductor after the last one in the list has concluded its assignment.
An exception for this process occurs when a conductor is moved to become the head of the baton list while keeping the baton.

A conductor of a season is switched on in each round of the season.
A musician switches on during a season either to learn or to receive, or possibly both.
A message transmitted by a conductor contains control bits for a taught musician who is to learn, and a packet sent to a receiving musician, unless the message is light.
We explain the actions of teaching/learning and sending/receiving next.

The purpose of \emph{learning} is to obtain information from the conductor, in particular one pertaining to a schedule to receive packets in  the next season with the same conductor.
A station learns by extracting control bits from a message transmitted by the conductor and interpreting them as round numbers to be switched on during the next season when the same stations acts as conductor.
If a conductor conveys information to a learning station then we say that the conductor \emph{teaches} the station this information.

The purpose of \emph{receiving} is to obtain packets injected into a conductor and destined for a receiving musician.
A musician receives by extracting a packet included in a message from the conductor.
If a conductor transmits a message with a packet then it \emph{sends} this  packet to the receiving station.

For a message transmitted by a conductor to effectively serve its purpose, the following three involved stations need to be switched on simultaneously: the conductor, the learning musician, and the receiving musician.
Musicians  switch  on during a season in the following manner.
First, the musicians switch on to learn: they do it one by one, in the order of their names, for one round at a time.
Second, the musicians switch on to receive: they do it according to the schedule taught  during the latest previous season when the same station was a conductor.

A packet injected into a station, when it acts as a conductor, stays \emph{new} for the duration of this season, and after that becomes \emph{old}.
A packet injected into a musician becomes old immediately.
In particular, when a new season begins, then all the packets queued in the stations are old.
At the start of a season, when some station becomes a conductor, this station  computes a schedule to send the old packets during the next season when it will become a conductor again.
The schedule concerns only these old packets that have not been scheduled yet for the current season.
A conductor schedules packets to send  in the order of their injections.

A station considers itself \emph{big} if it has at least $n^2 -1$ old packets in its queue.
A conductor that is big at the beginning of a season teaches each musician of this status, by suitably setting a toggle bit in messages.
After a musician learns this information, it moves the conductor to the front of its private copy of the baton list.
Such a season concludes with all the musicians having identical private lists representing the baton list, with the conductor at the front.
A big conductor  keeps the baton for the  next season, after moving to the front of the baton list, and stays at the front as long as it is big.
This mechanism allows for one station to act as a conductor for long stretches of seasons, possibly indefinitely, should the adversary inject packets into one station only.

We group seasons into contiguous intervals of seasons, depending on the  heaviness on traffic  during these seasons.
If the total number of packets in the queues at the beginning of a season is greater than $ n^3-2n+1$ then the season belongs to a \emph{dense  interval of seasons}, which means the traffic will be heavy.
Otherwise, a season belongs to a \emph{sparse interval of seasons}, which means that traffic might be light.
We expect that there exist big stations when traffic is heavy.
A station is \emph{pre-big} in a round of an interval of seasons if it has not been big during this interval before the round.
A station is \emph{post-big} in a round of an interval of seasons if it is not big now but it has been big by this round during the interval.

%: theorem

\begin{theorem}

There are at most $2n^3+\beta$ packets queued in a round of an execution of algorithm \algoname{Orchestra} against an adversary of injection rate~$1$ and with a burstiness coefficient~$\beta$.
\end{theorem}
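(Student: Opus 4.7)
The plan is to bound the total queue size separately for seasons in sparse and dense intervals, exploiting the dichotomy defined just before the statement, and to combine the two bounds.

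First, for any season inside a sparse interval, the total queue at its start is at most $n^3-2n+1$ by definition. The adversary injects at most $(n-1)+\beta$ packets during the $n-1$ rounds of such a season, so throughout it the queue stays below $n^3+\beta$, well within $2n^3+\beta$. This disposes of the easy regime.

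Second, for dense intervals, I would argue that at the start of every season in the interval a big station must exist: since the total queue exceeds $n^3-2n+1\ge n(n^2-1)$, pigeonhole forces some station to hold at least $n^2-1$ packets, and one season's worth of fresh injections into a single station (the new-versus-old gap) is absorbable into the bookkeeping. Once a big station takes the baton, the big-conductor mechanism described in the algorithm moves it to the front of the baton list and keeps it there, so it remains conductor for as many consecutive seasons as it stays big. During these seasons its precomputed schedule drains one old packet per scheduled round into musicians whose receive slots were taught in the previous season of the same conductor, so drainage essentially matches the injection rate~$1$ modulo a constant number of light teaching rounds per musician.

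Third, I would account for what happens while the baton travels toward a big station. Because the baton cycles through all stations, a big station seizes it within at most $n$ seasons, i.e., at most $n(n-1)$ rounds, during which at most $n(n-1)+\beta$ packets can be injected. Added to the starting queue of at most $n^3-2n+1$ together with one season of injection absorbed when the dense interval was entered, this yields an upper bound comfortably below $2n^3+\beta$ on the queue at the moment a big conductor begins its reign. Subsequent reigns within the same dense interval can only reduce the excess, as each one delivers at the adversary's rate while injections continue at the same rate.

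The main obstacle will be proving the sustaining invariant across an entire dense interval: each reign of a big conductor must drive the total queue back toward the dense threshold so that excess does not cascade across successive reigns. I expect the pre-big, big, and post-big classification introduced in the excerpt to be the key tool. Packets in a pre-big station are charged to the future reign that this station will itself serve once it becomes big; packets in the current big conductor are charged to the current reign; packets in post-big stations are bounded because such a station has just drained below the big threshold and can only refill at the adversary's rate during at most one further baton cycle. Combining these three charges with the burstiness budget~$\beta$ should give exactly the claimed $2n^3+\beta$ bound.
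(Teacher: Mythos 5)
Your opening moves match the paper's: you split into sparse versus dense seasons, you handle sparse seasons by $D+(n-1)+\beta$, and you use the pigeonhole argument to produce a big station in each dense season. From there, however, your proposal diverges and leaves a genuine gap.

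The paper's key observation, which your plan never isolates, is that against a rate-$1$ adversary the total queue size increases in a round \emph{if and only if} that round is \emph{light}, i.e.\ the conductor transmits a message carrying control bits but no packet. Once this is noted, bounding the queue in a dense interval reduces to counting light rounds, and the paper does exactly that: it shows a big conductor has a full schedule (because it had at least $n-1$ old packets the previous time it conducted, having since accumulated at least $n^2-1-n(n-1)=n-1$), shows post-big conductors likewise never go light (they ended their big reign with $\ge n(n-1)$ packets and can be displaced backward at most $n-1$ times), and concludes that only the at most $n-1$ pre-big stations contribute light rounds, at most $(n-1)^2$ each, giving $(n-1)^3$ extra packets beyond the sparse-interval starting point. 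Your sketch instead proposes a charging scheme (pre-big to a future reign, big to the current reign, post-big to a refill-rate bound) that is not developed far enough to yield a numerical bound, and in places rests on statements that would themselves need proof. For instance, you assert a big conductor "drains one old packet per scheduled round" without establishing that its schedule is full for the whole season — that is precisely the paper's main lemma and requires the two-season lookback argument. You also describe light rounds as "teaching rounds," but in the algorithm a message can simultaneously teach and carry a packet; lightness is determined only by the absence of a packet, and your reasoning about "a constant number of light teaching rounds per musician" conflates the two.

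Concretely, the step that would fail is your final paragraph: you acknowledge the "sustaining invariant" across a dense interval as the main obstacle, but the amortized charging you propose does not obviously bound the \emph{current} queue size, only an aggregate over reigns, and there is no mechanism in your sketch that accounts for the possibility that the baton visits several pre-big conductors in a row before reaching a big one. The paper's light-round count, with its explicit $(n-1)$ possible front-insertions per station, is what caps that wandering. Without adopting some version of the light-round bookkeeping, I don't see how to complete your proposal to the claimed $2n^3+\beta$.
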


\begin{proof}
Let $D = n^3-2n+1=n(n^2-2)+1$ be the number of old queued packets used to differentiate between dense and sparse intervals of seasons.
A big station has at least $n^2-1$ old packets in its queue.
By the pigeonhole principle, there exists at least one big station during a season in a dense interval of seasons.

We estimate the number of queued packets during a season depending on whether the season belongs to a dense or sparse interval.
The adversary's capability to inject packets due to the burstiness coefficient is accounted for only once at the end of a derivation of an upper bound on the number of queued packets. 

First, we consider sparse intervals.
The system starts a season with empty queues, so the first season belongs to a sparse interval.
If the system starts a season in a sparse interval then it has at most~$D$ packets.
The adversary can inject at most $(n-1)$ packets during a season.
So there can be at most these many old packets in stations within a sparse interval:
\begin{equation}
\label{eqn:sparse}
D + (n-1)  = n^3-2n+1 + n-1 = n^3 -n
\ ,
\end{equation}
not including burstiness.
This is also an upper bound on the number fo queued packets when a 
 sparse interval ends and a dense interval begins.

Next, we consider dense intervals.
In such an interval, the adversary can be assumed to inject at full power, namely,  a packet per round.
If a message with a packet is heard on the channel, then this does not affect the number of queued packets, since only one packet gets injected.
Otherwise, if a light message is heard, this results in the number of queued packets incremented by~$1$.
This makes an upper bound on the number of light messages heard on the channel serve as an upper bound on the increase on the number of queued packets.

We claim that neither big nor post-big stations can contribute light rounds when acting as conductors during dense intervals.
It follows that only pre-big stations contribute light rounds.
We prove this claim next. 

A big station has at least $n^2-1$ packets in its queue at the beginning of a season when it obtains the baton.
It must have had at least $n-1$ old packets to schedule at the beginning of the previous season it was conducting, since the adversary could inject at most these many packets in the meantime during $n$ seasons, without accounting for burstiness:
\[
n(n-1)=n^2 -1-(n-1)
\ .
\]
A conductor that had at least $n-1$ old packets at the beginning of the previous season it conducted, has already scheduled a full season,  so it sends a packet in each among the $n-1$ rounds of the current season.
We conclude that a big station contributes no light  rounds as long as it gains and maintains the status of a big conductor.

Consider a station~$i$ that is post-big but not big.
Station~$i$ has an opportunity to transmit only when there is no big station before it on the list, as such a station would be visited by the baton first and moved the baton back to front.
When station $i$ receives the baton and $i$ is not big, then there is a big station after~$i$, because such a station exists in every season of a dense interval.
The first big station encountered by the baton is moved to the front of the baton list, thereby incrementing the $i$'s position to~$i+1$.
The position of a station cannot increase more than $n-1$ times in this way.
The last time when $i$ was big, it had at least $n^2 -1$ packets in its queues and was placed at the front of the baton list.
Station~$i$ had at least these many packets  when ending a season in which it was a big conductor:
\[
n^2-1-(n-1)= n(n-1)
\ .
\] 
So the station can afford to increase its position up to $n$ times while consistently sending $n-1$ packets per season when serving as a conductor. 
We conclude that a post-big station contributes no light  rounds during seasons in a dense interval.

We are finally ready to count light rounds in a dense interval, all of which could be contributed by pre-big conductors only.
There are at most $n-1$ pre-big stations in the system at the beginning of a dense interval, because at least one station is big.
Light rounds occur in a season when the conductor has fewer than $n-1$ old packets scheduled to send; let us assume conservatively that such a conductor does not have any scheduled packets to send, to maximize the number of light rounds the season contributes.
A pre-big station $i$ becomes a conductor only when the first big station  on the baton list is behind station~$i$.
After the baton leaves and eventually reaches a big station, this big station advances to the front and the $i$'s position shifts by~$1$.
Such shifts can occur at most $n-1$ times.
It follows that a pre-big station can contribute at most $(n-1)^2$ light rounds when acting as a conductor in a dense interval.
Therefore all the pre-big stations together contribute at most $(n-1)^3$ light rounds as conductors.

The bound~\eqref{eqn:sparse} estimates the number of queued packets when a dense interval begins.
This number can grow by at most the number of light rounds while the adversary injects at full power, plus burstiness.
These three parts together contribute the following:
\[
n^3-1 +(n-1)^3 +\beta = (n-1)(2n^2  -2n +1)+\beta = 2n^3 -4n^2 +3n -1+\beta\le 2n^3+\beta
\ .
\]
This quantity serves as the ultimate upper bound on the number of queued packets.
\end{proof}

\subsection{An impossibility for maximum throughput}

\label{subsect:impossibility-throughput}

Algorithm \algoname{Orchestra} requires at least $3$ stations to be switched on in each round.
We show that this is necessary for any algorithm to have throughput~$1$.

%: lemma

\begin{lemma}
\label{lem:executions}

Given an algorithm for a system of  $n \geq 3$ stations, let us assume that we have defined an execution of the algorithm until some round $t_{i-1}$ such that the following holds: at least one station~$s$ has no packets in its queues, no other station has packets to be delivered to~$s$, and there are at least~$i-1$ packets in the system.
Then either the execution can be extended without bounds and the number of packets in the system grows unbounded or there exists a round $t_i>t_{i-1}$ such that the execution can be extended until $t_i$ in a way that the round $t_i$ satisfies the following conditions:
\begin{enumerate}
\item[\rm (a)] no packet is successfully transmitted at round $t_i$, and

\item[\rm (b)] by the end of round $t_i$ at least one station~$s'$ has no packet in its queues and no other station has packets destined for station~$s'$, and

\item[\rm (c)] after round $t_{i-1}$ and by the end of round~$t_i$, one packet per round has been injected into the system on average and the burstiness was at most $1$.
\end{enumerate}
\end{lemma}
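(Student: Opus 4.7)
The plan is to design an adversary strategy that extends the execution from round $t_{i-1}$ one round at a time, injecting at most one packet per round and choosing sources and destinations by simulating the deterministic algorithm. The key structural fact is that with energy cap $2$, at most two stations are on per round and delivery requires the destination to be one of them. Since $s$ is clean at round $t_{i-1}$ and no station holds a packet destined for $s$, any round in which $s$ is switched on is automatically silent, provided the adversary has not generated packets for $s$ and the other on-station does not transmit a self-addressed packet; the adversary simply refuses to ever generate self-addressed packets.

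The first case (\emph{Case A}) arises when, under a tentative injection schedule that avoids $s$ as both a source and a destination, the algorithm eventually switches $s$ on at some round $t > t_{i-1}$. I take $t_i$ to be the first such round. The adversary follows the tentative schedule for rounds $t_{i-1}+1, \ldots, t_i$, which preserves the cleanness of $s$ up to round $t_i-1$. Round $t_i$ itself is silent by the structural fact above, giving condition (a), and condition (c) is immediate since exactly one packet per round has been injected with burstiness zero across this interval. For condition (b), I arrange the injections in the last rounds before $t_i$ so that at the end of $t_i$ some station is clean: either $s$ is not made to adopt the packet transmitted by its partner $u$ and then $s'=s$ works, or $s$ does adopt the packet, in which case the injections have been chosen so that $u$'s queue is emptied by this transmission and $u$ has not been the destination of any recent injection, so $s'=u$ works.

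The second case (\emph{Case B}) is when no placement of non-$s$ injections can force the algorithm to switch $s$ on. The adversary then changes strategy and injects all of its one-packet-per-round quota into $s$, with destinations chosen arbitrarily among the non-$s$ stations. Since $s$ is off forever under this execution, none of these packets can ever leave the queue of $s$, and the number of packets grows unboundedly, which matches the first branch of the lemma's conclusion.

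The hard part is Case A, where the requirement that some station be clean at the end of the silent round $t_i$ forces me to control which packets reside where throughout the extension, despite the algorithm's freedom to relay by adopting transmitted packets. Making a specific station $s'$ clean hinges on choosing injections during the extension so as to avoid directing packets toward the station that transmits at $t_i$, so that after $t_i$ its queue is empty and nothing is addressed to it. This requires a finer case analysis of how the algorithm uses adoption and of how many packets each non-$s$ station carries entering round $t_i$; it is the only step that is not an immediate consequence of the energy-cap-$2$ restriction, and for $n=3$ it may need a separate small-case inspection since then the adversary has very few options for non-interfering source/destination pairs.
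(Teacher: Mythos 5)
Your proposal follows the same high-level route as the paper: dichotomize on whether $s$ ever switches on after round $t_{i-1}$, derive unbounded growth from undeliverable packets in the negative case, and otherwise take $t_i$ to be $s$'s first wake-up. The paper, however, makes this concrete by pairing two explicit injection schedules — Case~I, where $s_1$ receives a packet addressed to $s$ every odd round and one addressed to $s_2$ every even round, and Case~II, where $s_1$ receives only packets addressed to $s_2$ — and observing that these executions are indistinguishable to $s$ until its first wake-up, so the adversary can commit after the fact to whichever schedule is useful. Your ``tentative injection schedule that avoids $s$'' is never pinned down, and your Case~B differs slightly (you inject \emph{into} $s$ rather than injecting packets \emph{addressed to} $s$ into $s_1$, as the paper does); both variants can be made to work, but the paper's Case~I is what makes the indistinguishability argument clean, and its Cases~I and II together handle all $n\ge 3$ uniformly, so the special $n=3$ inspection you worry about is unnecessary.

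The substantive gap is in your Case~A. Your ``structural fact'' asserts that any round in which $s$ is on is \emph{automatically silent}, but the energy cap~$2$ only forces that the other on-station $u$ cannot \emph{deliver} a packet that round (nothing is addressed to $s$, and self-addressed packets are excluded); nothing prevents $u$ from transmitting a packet addressed to a third station, which $s$ would then hear. Two sentences later you contradict the silence claim by considering the case in which $s$ ``adopts the packet transmitted by its partner $u$'' — which presupposes the round was not silent. You correctly flag establishing (b) (finding a clean $s'$) as the unresolved step, and this adoption possibility is exactly the difficulty, but the ``finer case analysis'' that would close it is left entirely unspecified. The paper's Case~II construction is precisely what makes the assertion that no packet involving $s$ ever enters the system verifiable — every new packet is routed from $s_1$ to $s_2$ — and the paper then concludes (tersely) that no packet is heard at $t_i$ and hence $s'=s$ works. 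Without either a concrete injection pattern or the promised case analysis controlling adoption, your argument for (a) and (b) does not go through as written.
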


\begin{proof}
Let $t_0$ be the first round, and $i$ be an integer such that $i>0$.
Let $s_1$ and $s_2$ be two stations different from $s$.
We consider the following two possibilities to extend an execution, as determined by the adversary after round $t_{i-1}$.
\begin{description}
\item[\rm Case~I:] No packet is injected into station~$s$, station~$s_1$ gets  one packet injected  into it addressed to~$s$ in each odd round and one packet addressed to~$s_2$ in each even round:

\item[\rm Case~II:] No packet is injected into station~$s$, station~$s_1$ gets one packet injected  into it addressed to station~$s_2$ in each round.
\end{description}

For station~$s$, these two cases to extend the execution are indistinguishable up to a round~$t$ when station~$s$ becomes switched on for the first time after round~$t_{i-1}$.
Now there are two possible continuations:
\begin{description}
\item[\rm Continuation~1:] such a round~$t$ does not exist; in the  execution determined by Case~I the number of packets addressed to station~$s$ grows unbounded.

\item[\rm Continuation~2:] such a $t$ exists; then the execution determined by Case~II  extended to round~$t_i = t$ satisfies the following:
\begin{enumerate}
\item[\rm (a)] 
no packet is heard at round $t$, as there are no packets involving station~$s$ in the system;

\item[\rm (b)] 
station~$s$ has no packets in its queues, since it had no packets in round~$t_{i-1}$ and no packet in the system was addressed to it, and between the rounds~$t_{i-1}$ and~$t_i$ (inclusive) this has not changed;

\item[\rm (c)] 
the adversary  injects exactly one packet per round.
\end{enumerate}
\end{description}
By the properties (a) through~(c) above, and by the assumption that there are at least $i-1$ packets in the system by round~$t_{i-1}$, the number of packets  at the end of round~$t_i$ is at least~$i$.
\end{proof}

Lemma~\ref{lem:executions} gives a sequence of rounds $(t_i)_{i\ge 0}$ such that there are at least~$i$ packets queued in the system at round~$t_i$.

%: theorem

\begin{theorem}
No  algorithm can be stable for energy cap $2$ and a system size greater than or equal to~$3$ against leaky-bucket adversaries with injection rate~$1$.
\end{theorem}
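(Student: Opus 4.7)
The plan is to iterate Lemma~\ref{lem:executions} to construct an adversarial execution of any supposed stable algorithm whose queue size grows without bound, contradicting stability.

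At the initial round $t_0$ the system is empty, so the hypothesis of Lemma~\ref{lem:executions} holds trivially for $i=1$: any station can serve as the required idle station $s$, no other station has packets destined for it, and the system contains $0 \ge i-1$ packets. I then apply the lemma inductively for $i=1,2,3,\ldots$, using the station $s'$ supplied by conclusion~(b) at step $i$ as the station $s$ at step $i+1$.

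At each step the lemma offers two mutually exclusive outcomes. Either the execution can be extended forever by the Case~I adversary of the lemma's proof, in which case the number of packets addressed to $s$ grows without bound and stability is already refuted. Otherwise the lemma produces a round $t_i$ at which at least $i$ packets sit in the queues, and conclusion~(b) re-supplies the induction hypothesis for step $i+1$. Iterating indefinitely in the second case yields a sequence of rounds $(t_i)_{i \ge 0}$ with unbounded queue sizes. Either way the algorithm is not stable.

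Finally I will check that the concatenation of all these adversarial segments yields a legitimate $(\rho,\beta)$ leaky-bucket adversary of rate $\rho=1$. By conclusion~(c) of the lemma, each segment between $t_{i-1}$ and $t_i$ injects exactly one packet per round and contributes burstiness at most~$1$. Hence the cumulative injection by any round $t$ is at most $t + O(1)$, which fits the rate-$1$ leaky-bucket profile with a small constant burstiness coefficient, so the adversary is admissible.

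The point I will need to handle cleanly is that conclusion~(b) of the lemma really does re-supply its hypothesis for the next iteration, so that the induction never breaks down: the station $s'$ delivered at round $t_i$ must have an empty queue and must have no packets anywhere in the system destined for it. Granting the lemma, which is where the assumptions $n \ge 3$ and energy cap~$2$ are consumed, the theorem itself reduces to this short induction.
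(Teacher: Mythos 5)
Your proposal is correct and follows essentially the same route as the paper: take the empty initial state as the base for $i=1$, then repeatedly apply Lemma~\ref{lem:executions}, at each step either landing in the ``Continuation~1'' branch (unbounded backlog at the unobserved station~$s$) or obtaining a round $t_i$ and a station $s'$ with which conclusion~(b) re-seeds the hypothesis for $i+1$, so that the queue size grows without bound along $(t_i)_{i\ge 0}$. Your added remark that conclusion~(c) guarantees the concatenated injection pattern is a legal rate-$1$ leaky-bucket adversary is an explicit statement of something the paper leaves implicit, but it is the same argument, not a different one.
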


\begin{proof}
Suppose that such an algorithm exists, to arrive at a contradiction.
The argument is by induction on the round numbers.
Consider the first round of an execution of the algorithm, to provide a base for induction.
By the assumption about the energy cap, at least one station~$s$ needs to be switched off.
The assumptions of Lemma \ref{lem:executions} are satisfied for~$t_0=1$, so that $i=1$.

Next we show the inductive step.
Assume that the assumptions of Lemma \ref{lem:executions} are satisfied for some~$i\ge 1$, so that  there is an execution  determined up to some round $t_{i-1}$ such that the following holds: at least one station~$s$ has no packets, no station has packets addressed for~$s$, and there are at least $i-1$ packets in the system.

By Lemma \ref{lem:executions}, this prefix of an execution up to round~$t_{i-1}$ either could be extended to a full execution such that  the number of queued packets grows unbounded, or
there is a round~$t_i$ and an extension that satisfy the assumption of Lemma \ref{lem:executions}
for~$i+1$.
To conclude, either there is $i$ such that from round~$t_i$ there is an~unstable extension of the execution, or we could continue extending the execution through rounds~$t_j$, for all integers~$j$.
In the latter case, the number of packets in the system grows unbounded with~$j$, and the resulting execution is unstable.
\end{proof}

\section{Universal Routing}

\label{sec:universal}

We give two routing algorithms with universally bounded latency.
One routes directly while using control bits in messages to coordinate stations and the other routes indirectly with messages consisting of plain packets only.

\subsection{General universal routing}

\label{subsec:general-universal}

The direct-routing algorithm using control bits in messages is called \algoname{Count-Hop}; it operates as follows.
One station is dedicated to serve as a \emph{coordinator} and the other stations are \emph{workers}.
An execution is structured into \emph{phases}.
Packets transmitted in a phase need to be \emph{old}, in that they were injected in the previous phase.
Packets injected in the current phase are \emph{new} for the duration of the phase.
At a round when a phase ends, all the packets available in the system become old for the next phase.
These are the only old packets for the next phase, which means that each station knows which among its packets are old, for the duration of this phase, when a new phase begins.
The first phase consists of $n$ rounds during which all the stations are switched off.
Each of the following phases proceeds through \emph{stages}, which are time intervals spent by the stations working to accomplish some task.

A phase is partitioned into $n$ stages, one for each receiving station.
Such a stage for each receiving station consists of three substages.
During the first substage, each station, except for the receiving station~$v$ and the coordinator, transmits once, sending a message with the number of old packets destined for~$v$.
This information allows the coordinator to assign to each station a time interval to transmit all its packets to~$v$.
The second substage consists of the coordinator transmitting the offset number for each station to be switched off waiting for its turn to transmit.
Finally, the third substage has all the stations switch on  one by one when the turn comes to transmit the old  packets destined to~$v$, while the station~$v$ is switched on during the whole substage and the coordinator is switched off.

%: theorem

\begin{theorem}
\label{thm:one-hop}

A direct-routing algorithm \algoname{Count-Hop} requires the energy cap $2$, is stable for each injection rate $\rho<1$, and its latency for such an injection rate is at most the following:
\[
\frac{2(n^2+\beta)}{1-\rho}
\ .
\]
\end{theorem}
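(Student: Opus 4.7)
The plan is to split the claim into three parts: verifying the energy cap, bounding the length of a single phase, and converting that bound into a latency estimate by solving the resulting leaky-bucket recurrence.

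First, I would inspect the three substages of each stage to check the energy cap. In substage~1 only the coordinator (listening) and the currently reporting worker are on; in substage~2 only the coordinator (transmitting) and the currently addressed worker; in substage~3 only the receiver~$v$ and the single transmitting worker, with the coordinator switched off throughout. Thus in every round at most two stations are on, confirming that the cap of~$2$ suffices.

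The heart of the proof is a structural lemma bounding the length $L_i$ of phase~$i$. Let $P_i$ denote the number of old packets available at the start of phase~$i$; by the rule that only old packets may be transmitted, these are exactly the packets injected during phase~$i-1$. Each stage, devoted to a single receiver~$v$, uses $\Theta(n)$ control rounds for the reporting and offset substages and then a block of substage-3 rounds whose length equals the number of old packets destined for~$v$. Summing over the $n$ receivers yields a bound of the form
\[
L_i \le c\, n^2 + P_i
\]
for a small absolute constant~$c$. The essential feature of the construction is that \emph{every} one of the $P_i$ old packets is delivered within phase~$i$, because each receiver stays switched on throughout its own substage~3 while every old packet destined to it is transmitted exactly once.

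Combining this with the adversary's constraint $P_i \le \rho L_{i-1}+\beta$ gives the linear recurrence $L_i \le c\, n^2 + \rho L_{i-1} + \beta$, which for $\rho<1$ is a contraction with fixed point $L^{\star}=(c\, n^2+\beta)/(1-\rho)$. A simple induction anchored at $L_1=n$ yields $L_i\le L^{\star}$ for every~$i$, proving stability. A packet injected during phase~$i-1$ must wait at most until the end of phase~$i$, so its delay is bounded by $L_{i-1}+L_i \le 2L^{\star}$, which is the stated form of the bound. The main obstacle will be the careful accounting in the structural lemma: one needs to show that the substage-1 and substage-2 overhead, summed over the $n$ stages of a phase, actually matches the coefficient $n^2$ (rather than a strictly larger constant multiple) so that the latency $2(n^2+\beta)/(1-\rho)$ comes out with the exact constant. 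Once this bookkeeping is settled, both stability and the latency estimate follow from the contraction above.
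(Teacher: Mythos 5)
Your proof takes essentially the same route as the paper: you set up the recurrence $L_i \le c\,n^2 + \rho L_{i-1} + \beta$ and solve to the fixed point $(c\,n^2 + \beta)/(1-\rho)$, while the paper unrolls the same recurrence into the geometric series $(n^2 + \beta)(1 + \rho + \rho^2 + \cdots)$; both then conclude with the observation that a packet spends at most two consecutive phases in a queue. The one issue you flag—pinning the control overhead per phase at $n^2$ rather than a larger constant multiple—is exactly the step the paper dispatches by asserting the per-phase control overhead is $(n-1)^2 \le n^2$, so your caution is well placed.
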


\begin{proof}
Each packet is delivered from the station of injection to the station of destination in one direct hop, by the algorithm's design.
There are $(n-1)^2$ rounds spent transmitting numbers during a phase but no packets.
While such messages are transmitted, the adversary can inject new packets.
These packets will extend the duration of the next phase by up to $\rho (n-1)^2$ rounds.
This phenomenon can be iterated in a cascade-like manner, since when packets are transmitted, the adversary can use this time to inject even more packets.
Taking into account all the possible extensions of phases, the duration of any phase is at most the following:
\[
(n^2+\beta)(1+\rho + \rho^2+\ldots ) = \frac{n^2+\beta}{1-\rho}
\ .
\]
A packet stays in a station during at most two consecutive phases.
\end{proof}

\subsection{Plain-packet universal routing}

\label{subsect:poly}

We describe an indirect-routing plain-packet algorithm that requires only a constant energy cap, but has universally bounded latency and attains packet delay $\cO(n^3\log n)$.
The algorithm is called \algoname{Adjust-Window}.

An execution of  \algoname{Adjust-Window} is structured into segments called \emph{time windows}.
The size of a time window may increase in the course of an execution.
The current size of a window is denoted by~$L$. 
All the stations use the same value of $L$ at each round. 

Packets injected before the current time window are called \emph{old} and  packets injected during the current time window are called \emph{new} for the duration of the window.
The goal to achieve during a window is to deliver all the outstanding old packets to their destinations. 
Whether or not this goal is accomplished in a particular time window may depend on the magnitude of~$L$.
Old packets that do not get delivered in a window remain old for the duration of the next window.  
If some old packets are not delivered in the current time window, then the window size~$L$ gets doubled to become~$2L$, which determines the duration of the next window. 
Otherwise, if all the old packets are successfully delivered in a window, then the window size~$L$ stays the same,  and so the duration of the next window stays the same as well.

Algorithm \algoname{Adjust-Window} works with energy cap~$2$.
An execution is organized such that in each round at most one station transmits.
If a station~$i$ transmits a message with packet in a round and another station~$j$ is switched on, then we say that \emph{station~$i$ sends the transmitted packet to~$j$}.
If a station~$i$ sends a packet to station~$j$ and the packet is heard on the channel, then station~$i$ removes the packet from its queue and station~$j$ either consumes it, if it is addressed to~$j$, or else adopts it and becomes its relay station.
This means  that packets may hop from station to station, and routing may be indirect.

A time window is partitioned into three stages: Gossip, Main, and Auxiliary.
The goal of a Gossip stage is to exchange information between stations regarding the numbers of old packets in their queues with particular destinations. 
In a Main stage, the stations transmit old packets directly to their destinations according to a schedule based on the information exchanged and learned during the preceding Gossip stage. 
A station knows the part of such a schedule relevant to its actions:  it knows   when to transmit messages to which destinations, and in which rounds to listen to messages addressed to it. 
It may happen that a station~$i$ needs to convey some information to a station~$j$ while station~$i$ does not have packets with the destination~$j$, then $i$ sends some packet(s) to~$j$ whose destination is different from~$j$. 
An Auxiliary  stage deals with delivering such relayed packets to their destinations, as well as handling old packets at stations that could not participate in neither Gossip nor Main stages due to lacking sufficiently many packets.

A message transmitted on the channel may only contain one plain packet  without attached control bits.
This means that numbers encoded as strings of bits cannot be piggybacked on messages.
Instead,  we design a protocol called \emph{coded transfer (of bits)} to encode sequences of bits by way of sequences of transmissions of single packets, with rounds of transmissions possibly interspersed with silent rounds. 
One round of coded transfer can convey one bit.
Coded transfer needs to overcome the following technical obstacle: a station that is supposed to transmit a packet to convey a bit needs to have at least one packet available in its queue, which after a successful transmission is removed from the queue. 
This implies that stations with empty queues cannot transmit messages and so their lack of transmission activity needs to be properly interpreted by the other stations.

Coded transfer of bits works as follows.
Suppose that a station~$i$ is to transfer $r$ bits $B_1,B_2,\ldots,B_r$ to another station~$j$, for $0\le i,j< n$, and the size of the queue of station~$i$ is at least~$r$. 
Then, in $r$ consecutive rounds, station~$i$ sends a packet to~$j$ in the $k$th consecutive round if and only if $B_k=1$, for $1\le k\le r$, while station~$j$ listens to the channel.
This approach makes the transmitting station~$i$ use  one packet for each transmitted bit~$1$ and no packet for a~$0$. 
Station~$i$ may transmit packets not addressed to~$j$, if packets addressed to~$j$ are not available at~$i$; if this occurs then station~$j$ adopts them and becomes their relay.

The stages of a window take a specific duration, depending on $n$ and $L$.
Let $L_G$, $L_M$, $L_A$ denote the number of rounds of a Gossip stage, a Main stage and an Auxiliary stage, respectively.
These three numbers sum up to $L$: $L_M+L_G+L_A=L$.
The magnitudes of $L_G$ and $L_A$ are determined next, and the remaining part of $L$ rounds of a window is taken by a Main stage.

We specify that stations without sufficiently many old packets do not participate either in Gossip stages or in Main ones. 
We categorize such stations as small.
In what follows, the notation~$\lg x$ stands for~$\lceil \log_2  (x+1)\rceil$.
Formally, a station is \emph{small} in the considered window of size $L$ if the size of its queue at the beginning of that time window is less than $4n\lg L$; otherwise, the station is \emph{large} in the window.
A large station has sufficiently many packets to transmit should they be needed to convey bits by coded transfer.

\Paragraph{A Gossip stage.}

The goal of a Gossip stage is to share information among the stations about the contents of their queues at the beginning of the current time window. 
Such transmission of information is performed  indirectly by coded transfer.

A Gossip stage consists of $n^2$ \emph{phases}, indexed by all pairs $(i,j)$ for $1\le i,j\le n$.
Each phase takes $2+3\lg L$ consecutive rounds.
Thus, a Gossip stage takes these many rounds:
\[
L_G=n^2(2+3\lg L)
 \ .
\]
An $(i,j)$-phase for $i\neq j$ is structured as follows. 
The station~$j$ listens to the channel in each round of a phase, as the only station that does so. 
If the station~$i$ is small, it stays silent for the whole phase; otherwise, if $i$ is large,
it conveys some information to $j$ as follows. 
The station~$i$ sends a packet to $j$ in the first round of the phase to notify $j$ that $i$ is large. 
Then, in the second round of the phase, $i$ sends a packet to $j$ if and only if its queue size is greater than~$L$.
Finally, during the following $3 \lg L $ rounds of the phase, $i$ conveys the following three numbers to $j$ by coded transfer: 
\begin{enumerate}
	\item
	the minimum of its queue size and $L$, 
	\item
	the number of packets in its queue with destination~$j$, or $L$ if the number of packets to $j$ is at least $L$, 
	\item
	the number of packets in its queue with destinations $k$  such that $k<j$, or $L$ if the number of such packets is at least~$L$.
\end{enumerate}
At the end of a Gossip stage, each station~$j$ knows one of the following about each station~$i$, where $0\le i< n$ and the size of the queue of station~$i$ is measured at the beginning of a Gossip stage:

\begin{enumerate}
	\item[(a)] the queue size of $i$ is less than $4n\lg L$, or otherwise 
	\item[(b)] the queue of $i$  has more than $L$ packets to some destination, or otherwise
	\item[(c)] the exact size of the queue of $i$, the number of packets in $i$ with destinations $k$ such that $k<j$, and the number of packets in the station~$i$ with the destination $j$, when none of the cases~(a) nor~(b) holds.
\end{enumerate}

This information determines the size of the next time window. 
Namely, if there is some station~$i$ such that the queue size of~$i$ is greater than~$L$ then the window size is doubled to become~$2L$. 
Similarly, if none among the queue sizes is greater than $L$ but the sum of the queue sizes of all the stations is greater than the length of the Main stage, the window size is also doubled to become~$2L$. 
If none of the two conditions holds, the time-window size $L$ stays the same for the duration of the next time window.

\Paragraph{A Main stage.}

If it is known that some stations have their queue sizes greater than~$L$, then the Main stage is dedicated to the station with the smallest name among them, which spends all the rounds transmitting its packets.
Suppose otherwise that no station has the size of the queue greater than~$L$.
Let $m$ be the total number of packets queued in the stations, which is known by each station.
The stations that are small in this window, meaning with fewer than~$2+3n \lg L $ packets in queues at the beginning of the window, do not transmit in this stage, as if they had no packets.
Based on the information collected in the Gossip stage, every station can compute on its own a comprehensive schedule for delivering the minimum of $L_M$ and $m$ packets from their queues that have been already stored in these queues at the beginning of the current time window. 
The schedule determines the sender of a packet and the destination of a packet for each round. 
Transmitting according to such a schedule completes the stage, where only a transmitter and receiver are switched on in each round.

A Main stage, as given above, has stations operate based on the sizes of their queues at the beginning of the current time window.
The actual numbers of old packets that the stations have in their queues, when a Main stage was planned, might have changed during the Gossip stage. 
This is because, as stations transmit old packets during a Gossip stage, these packets are not necessarily received by their destination stations, and so still need to be forwarded by the stations that received them and now should act as relays. 
This issue is taken care of by Auxiliary stages.

\Paragraph{An Auxiliary stage.}

The goal of this stage is to deliver all the old packets  that  are in the queues of small stations along with the packets received by the stations in a Gossip stage during the coded transfer that still need to be forwarded.
This task is accomplished by the following round-robin style algorithm.

A stage is structured into \emph{phases} of $n^2$ rounds each, indexed by the pairs $(i,j)$ for $0\le i,j <n$.
In a round $(i,j)$ of a phase, $j$ listens and $i$ sends a packet to~$j$, provided that $i$ has such a packet in its queue. 
Since each small station has at most  $4n \lg L$ packets at the beginning  of a window,  and a station can receive at most these many packets 
\[
(2+3\lg L)\cdot(n-1)\le 4 n\lg L
\]
during a Gossip stage, provided that $2\le \lg L$, it is sufficient to execute $8n\lg L$ phases to guarantee that all the considered packets are delivered to their destinations.
We may specify that an Auxiliary stage takes these many rounds:
\[
L_A=n^2\cdot 8n\lg L
\ .
\]

To summarize, a Gossip stage consists of $n^2(2+3\lg L)\le 4n^2\log L$ rounds and an Auxiliary stage takes  $8n^3\log L$ rounds.
A Main stage takes the remaining rounds, their number being at least
\[
L-4n^2\lg L - 8n^3\lg L
\ge 
L-9n^3\lg L
\ ,
\]
for sufficiently large~$n$.
We set the initial value of $L$ to the smallest natural number such that the following inequality holds:
\[
L-9n^3\lg L\ge\frac12L
\ .
\]
Thus the first Main stage takes at least half of the length of the first window, and so there is enough room for the first Gossip and Auxiliary stages to be completed.

%: theorem adjust window

\begin{theorem}
\label{thm:adjust-window}

A plain-packet algorithm \algoname{Adjust-Window} needs the energy cap~$2$ and has the following latency, for each adversary of injection rate $\rho<1$ and burstiness $\beta$:
\[
\frac{18n^3\log^2n+2\beta}{1-\rho}
\ , 
\]
where $n$ is sufficiently large with respect to $\rho$ and $\beta$.
\end{theorem}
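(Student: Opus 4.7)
The plan has three parts: establishing the energy cap, bounding the stabilized window length, and deriving the latency from the window length. The energy cap of $2$ is immediate, since every round of each stage (Gossip, Main, Auxiliary) has at most one scheduled transmitter and at most one scheduled listener --- the Gossip phase's target $j$, the Main stage's scheduled destination, or the $j$ of an Auxiliary round $(i,j)$.

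For the bound on $L$, the doubling rule fires on a window of size $L$ precisely when the Gossip stage reports some queue exceeding $L$ or a total number of old packets exceeding $L_M \ge L - 9 n^3 \lg L$. I would establish the invariant that, once $L$ reaches the threshold $L^{*} = \frac{9 n^3 \log^2 n + \beta}{1-\rho}$, the total number of old packets at the start of every window is at most $\rho L + \beta$. This is self-consistent, since $\rho L + \beta \le L - 9 n^3 \lg L$ holds for $L \ge L^{*}$: for $n$ sufficiently large, $\lg L^{*} \le 2 \log n$, so $9 n^3 \lg L^{*} \le 18 n^3 \log n \le 9 n^3 \log^2 n = (1-\rho) L^{*} - \beta$. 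Under the invariant, no doubling trigger fires, the Main stage delivers every old packet, and the at most $\rho L + \beta$ packets injected during the window become the old packets of the next window.

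For the bootstrap and latency bound, $L$ can only double, so it reaches the interval $[L^{*}, 2 L^{*}]$ in at most a logarithmic number of windows. During this transient, relayed packets created by coded transfer in a Gossip stage are cleared by the $8 n^3 \lg L$ rounds of the following Auxiliary stage, while undelivered old packets simply carry over as old. A packet injected during window $i$ is delivered by the end of the first subsequent window in which the invariant has taken hold, and the cumulative delay is bounded by the telescoping sum of the (doubling) window lengths from $L_i$ up to the stable value, which is at most $2 L^{*}$. Absorbing the factor-$2$ overshoot in the bootstrap into the constants of the dominant terms gives the claimed latency $\frac{18 n^3 \log^2 n + 2\beta}{1-\rho}$ for $n$ sufficiently large.

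The main obstacle is the bookkeeping for the bootstrap phase: at a doubling step some old packets from the prior window remain undelivered, so the invariant must be complemented by a bound on the accumulated residuals. A geometric-series argument should show that these residuals shrink relative to $L$ as $L$ doubles, which is why the stable window length lies within a constant factor of $L^{*}$ and the latency bound holds with the stated constants; a secondary subtlety is correctly counting packets that temporarily appear in relay stations' queues during coded transfer, since these must be accounted for until they are cleared by the Auxiliary stage.
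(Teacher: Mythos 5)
Your proposal takes essentially the same approach as the paper's proof: identify the same threshold $L^{*} = \frac{9n^3\log^2 n + \beta}{1-\rho}$, verify it is self-consistent (i.e.\ $\rho L + \beta \le L - 9n^3\lg L$ holds there), use a geometric-series argument to show that the residual old packets at the start of a window after a chain of doublings are still bounded by $\rho L + \beta$, and conclude the latency is two window lengths. The only point worth noting is that you flag the factor-$2$ overshoot risk (the final $L$ could land anywhere in $[L^{*}, 2L^{*}]$), which the paper's proof treats somewhat loosely; your acknowledgment of this and the proposal to absorb it into the constants is, if anything, slightly more careful than the paper, but it does not change the underlying argument or its structure.
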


\begin{proof}
It suffices to have such a window length~$L$ that the duration of a Main stage is greater than the largest number of packets that might be injected in a window, which is $\rho L + \beta$. 
Let us assume temporarily that $\rho$ and $\beta$ are known to the stations and therefore the initial value of $L$ can be properly determined, based on these $\rho$ and $\beta$.
This assumption may be dropped, as we show later.
	
A Main stage has at least $L-9n^3\lg L$ rounds, so it suffices for a window size~$L$ to satisfy the following inequality:
\[
L- 9 n^3 \lg L \ge \rho L + \beta
\ . 
\]
The above inequality holds for $L=\frac{9n^3\lg^2 n+\beta}{1-\rho}$, for $n$ that is sufficiently large with respect to $\rho$ and $\beta$, as can be verified directly.
The latency is at most $2L$ because a packet may spend two consecutive windows in a queue, first as a new packet and then as an old one.
This completes the analysis in the case when $L$ is properly set at the beginning of an execution.
	
Next we incorporate into the analysis the mechanism by which the length of the next window may get increased after a current window is over. 	
If the window size is not increased at the end a time window $W$, then all packets injected before~$W$ are delivered during~$W$ and therefore only packets injected during $W$ are present in queues at the end of~$W$. 
Then our estimates of window size from the beginning of the proof apply. 

Suppose the size of a time window~$W$ is greater than the window size of the immediately preceding window. 
In general, let $W_1, W_2, W_3, \ldots, W_k$ be a sequence of consecutive windows, where $W_{i-1}$ occurs directly after $W_i$, for each $i$, and window~$W$ occurs directly after $W_1$.
Let moreover this sequence be  such that the size of the window $W_{i-1}$ is greater than the window size of the window~$W_i$, for each $1<i\le k$. 
This means that the window size was increased at the end of~$W_i$, and also either $W_k$ is the first window of the considered execution  or the size of the window preceding $W_k$ is equal to the size of the window~$W_k$. 
Thus, all packets injected before $W_k$ are delivered during $W_k$. 
Therefore, as the window size of $W_i$ is twice as large as the window size of $W_{i+1}$ for $1\le i<k$, the number of packets in all queues at the beginning of the window $W$ is at most the following:
\[
\rho L\cdot \Bigl(\frac12+\frac1{2^2}+\cdots+\frac1{2^k}\Bigr)+\beta \le \rho L+\beta
\ .
\]
In each execution, the window size $L$ eventually  becomes sufficiently large to provide that all packets injected before a window are transmitted within this window,  and this final window size is at most 
\[
L=\frac{9n^3\lg^2 n+\beta}{1-\rho}
\ , 
\]
by an argument applied as in the first part of this proof.
The latency is again at most twice the length of such a longest window.
\end{proof}

\section{Energy-Oblivious Indirect Routing}

\label{sect:oblivious-indirect}

Let an integer $k<n$ denote an energy cap.
We present now a plain-packet $k$-energy-oblivious algorithm called \algoname{$k$-Cycle}.

The algorithm operates as follows.
Up to $k$ stations are switched on in each round.
The stations are partitioned into $\ell$ \emph{groups} of size $k$ each.
The $i$th group is denoted as $G_i$, for $1,\ldots,\ell$.
Group $G_1$ consists of the $k$ stations $0,1,\ldots, k-1$, the next group $G_2$ comprises the station~$k-1$ and the next $k-1$ stations $k, k+1,\ldots, 2k-2$, the next group $G_3$ includes station~$2k-2$ and the next $k-1$ stations $2k-1,2k,\ldots, 3k-3$, and so on, with the last group padded with dummy stations if needed.
The underlying idea is that a group consists of $k$ stations with consecutive numbers, and a group $G_{i+1}$ starts from the last station in group~$G_i$ and includes the next $k-1$ stations.
In general, the number of groups is at most $\ell\le \frac{n-1}{k-1}+1$.

If $n\le 2k$ then  $k$ gets decreased such that $2k=n+1$, which allows to keep fewer stations switched on.
After this, we may assume that the inequality $2k\le n+1$ holds in general, which implies that there are at least two groups.

Two consecutive groups share one station, called a \emph{connector} of these groups, with group $G_\ell$ sharing  station~$0$ as a connector with group~$G_1$.
The stations in a group are ordered by their names into an ordered cycle.
Groups themselves are also arranged into an ordered cycle as follows: group~$G_{i+1}$ follows group~$G_i$, for $i<\ell$, and group~$G_1$ follows $G_\ell$. 

In each round $t$ of an execution, all the stations in some group $G_i$ are switched on, with the other stations switched off; we say that group $G_i$ is \emph{active} in round~$t$.
The pattern of activity among the groups follows round robin according to the order cycle of the groups.
A group is active for a time segment of these many rounds:
\begin{equation}
\label{eqn:group-time-segment}
\delta=\frac{4 (n-1) k}{n-k}
\ .
\end{equation}
When this time segment ends, the next group in the cyclic order takes over.

Each group executes an algorithm related to the broadcast algorithm \algoname{Old-First-Round-Robin-Withholding (OF-RRW)} during the consecutive rounds the group is active.
Algorithm \algoname{OF-RRW} was considered in~\cite{AnantharamuCKR-JCSS19}, we adapt it as a building block of routing algorithms; the details of the adaptation are given next.

There is a conceptual token associated with each group. 
The actions of stations  in a group are controlled by feedback from the channel.
The feedback is the same for all the stations in a group, which allows to handle the token in such a manner that it is not duplicated nor  lost.
The token passes through all the stations in a group in a round-robin manner.
When the token completes the whole cycle then this also ends a \emph{phase}.
Packets injected or adopted during a phase are \emph{new} for this phase and otherwise they are \emph{old} for this phase.
When a station receives a token then it transmits all its old packets one by one.
If there is no old packet to transmit by a station holding the token then this  station does not transmit anything, which results in a silent round.
A silent round triggers the token to advance to the next station in the group in their cyclic order.
When a station holding the token of a group~$G_i$ transmits then the message is heard on the channel by all the stations in the group~$G_i$.
If the destination station of this packet belongs to $G_i$ then the packet gets delivered and otherwise the station in $G_i$ that is a connector with $G_{i+1}$ adopts the packet and becomes its  relay.
This mechanism of handling packet implies that a packet may hop through all the $\ell$ groups until it reaches its destination station.

%: theorem

\begin{theorem}
\label{thm:Cycle-of-Groups}

Algorithm \algoname{$k$-Cycle} routes packets correctly, when the energy cap is at least $k$, and has latency at most $(32+\beta)\cdot n$ against an $(\rho,\beta)$-adversary such that $\rho< \frac{k-1}{n-1}$.
\end{theorem}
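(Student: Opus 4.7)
The plan is to decompose the proof into two parts: correctness of packet delivery, and a latency bound obtained by combining per-group throughput with inter-group transit time.

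First I would verify correctness. In each round exactly one group is active, so at most $k$ stations are switched on and the energy cap is respected. Within an active segment of a group $G_i$, all $k$ stations observe identical channel feedback, so the virtual token of the OF-RRW round-robin is tracked consistently across the group and is neither duplicated nor lost. Whenever a message is heard on the channel whose destination lies outside $G_i$, the distinguished connector toward $G_{i+1}$ adopts the packet. Since the groups are arranged in a cycle covering all stations, every packet eventually reaches the group containing its destination and is delivered there.

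For latency I would prove the following central claim: every packet leaves each group it enters during the first active segment of that group after the packet's arrival. Given this claim, a packet hops through at most $\ell$ consecutive groups in the cyclic order, with successive hops occurring in successive active segments, so its total residence time is at most one full group cycle, of length $\ell \delta$. With $\ell \le (n-1)/(k-1)+1$ and $\delta = 4(n-1)k/(n-k)$, together with the preliminary reduction ensuring $2k \le n+1$, a direct computation shows that $\ell \delta$ is a constant multiple of $n$ uniformly in $k$, which is the source of the absolute constant $32$; the additive $\beta n$ term absorbs burstiness.

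To establish the per-group progress claim, I would invoke the analysis of algorithm \algoname{OF-RRW} from \cite{AnantharamuCKR-JCSS19}, adapted to the present setting. A single traversal of the token through the $k$ stations of a group costs at most $k$ silent rounds plus one round per old packet present at the start of the traversal. Summing over the phases fitting inside an active segment of length $\delta$, at least $\delta - kP$ old packets get transmitted, where $P$ bounds the number of traversals. Using the injection-rate bound $\rho < (k-1)/(n-1) \approx 1/\ell$ in a flow-balance argument over one full cycle of groups, the number of packets inherited by any group at the start of its active segment is a constant fraction of $\delta$; the multiplier $4$ in the definition of $\delta$ is calibrated so that this inherited load, together with injections during the segment, fits inside the $\delta - kP$ transmissions a segment affords.

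The main obstacle is the cross-group bookkeeping, since the load on $G_i$ is a mixture of direct adversarial injections into its $k$ stations and packets relayed through the connector from $G_{i-1}$, and a single packet may contribute work to as many as $\ell$ groups over its lifetime. Making this accounting precise enough to yield exactly the bound $(32+\beta)n$, rather than a qualitative $\cO(n)$ bound, is the quantitative heart of the proof.
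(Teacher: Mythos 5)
Your overall strategy — reduce to the OF-RRW latency bound per group, observe that hopping through groups amplifies the effective injection rate by roughly $\ell$, and then argue that the activity segment length $\delta$ is calibrated so that each group clears its backlog — is the same route the paper takes. But there is a concrete gap in your central claim, and as a result your outline cannot by itself produce the stated bound.

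Your ``per-group progress claim'' asserts that \emph{every} packet leaves each group it enters during the first active segment after arrival; this is false when $\beta$ is large compared to $k$. An active segment has length $\delta = 4(n-1)k/(n-k)$, which is $\Theta(k)$ under the normalization $2k \le n+1$. A single burst of $\Theta(\beta)$ packets into one group cannot all be cleared in one segment once $\beta$ dominates $k$, so such packets necessarily persist through multiple activity segments. The paper handles exactly this by \emph{splitting} the OF-RRW delay bound $\frac{2k}{1-\rho}+2\beta$ into a rate component $\frac{2k}{1-\rho}$ and a burstiness component $2\beta$, and treating the two categories of packets with different accounting. Only the rate-accounted packets are shown to clear in a single segment, which is the content of the calibration $\frac{2k(n-1)}{n-k} < \delta$; the burstiness-accounted packets instead get the separate leftover budget of at least $3k$ rounds per segment (after subtracting the rate-packet rounds and at most $k$ silent token passes), and their total delay is bounded by $\frac{2\beta}{3k}$ segments per group times the number of groups. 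You gesture at ``the additive $\beta n$ term absorbs burstiness,'' but nothing in your argument actually produces that term: your claimed per-group progress would make burstiness vanish after one segment, giving a $\beta$-free $O(n)$ bound, which is not what the theorem states and is not achievable. To repair this you would need to import the two-component form of the OF-RRW bound and run the accounting for the two classes of packets separately, which is what the paper does; your plan as written collapses the two and therefore cannot reach $(32+\beta)\cdot n$.
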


\begin{proof}
A group operates as a virtual cycle of $k$ stations executing broadcast algorithm \algoname{OF-RRW}.
Such a cycle in isolation would have broadcast latency at most 
\begin{equation}
\label{eqn:cycle-1}
\frac{2}{1-\rho}\cdot k+\beta(1+\rho)\le \frac{2k}{1-\rho}+2\beta
\ , 
\end{equation}
for an injection rate satisfying only the inequality~$\rho<1$; see~\cite{AnantharamuCKR-JCSS19}.
A packet may perform at most $\frac{n-1}{k-1}$ hops through consecutive groups, which effectively amplifies injection rate by this factor.
Therefore injection rates need to be less than $\frac{k-1}{n-1}$ to make routing stable.

The bound~\eqref{eqn:cycle-1} on packet delay has two parts, among which $\frac{2k}{1-\rho}$ applies to packets injected within the injection-rate component and $2\beta$ applies to the packets for which injection-rate component does not suffice and they need the adversary's burstiness to justify their injection; see~\cite{AnantharamuCKR-JCSS19}.
We consider the delay of packets by categorizing the packets into two groups:  those for which the injection-rate component $\frac{2k}{1-\rho}$ suffices and the remaining ones to which the burstiness component $2\beta$ needs to apply to justify their delay.
This categorization of packets if for accounting purposes only.

For packets accounted for as injected subject to the injection-rate constraint,  the bound $\frac{2k}{1-\rho}$ on packet delay becomes at most
\begin{equation}
\label{eqn:cycle-2}
 \frac{2 k(n-1)}{n-k}
\end{equation}
after combining it with the upper bound $\rho<\frac{k-1}{n-1}$ on injection rates.
Bound~\eqref{eqn:cycle-2} on packet delay is less than the duration of a continuous segment of rounds of activity of a group of stations determined by~\eqref{eqn:group-time-segment}.
This implies that all the packets held by the stations in a group, and accounted for as injected subject to the injection-rate constraint, are heard on the channel when the group becomes active. 
This also means that these among such packets that are addressed to other groups will hop through the connector to the next group, while their current group is active.
Such hops will continue without delay other than that incurred by the period of activity of the group where these packets reside. 

The bound~\eqref{eqn:cycle-2} needs to be increased to the duration~$\delta$ for a period of activity~\eqref{eqn:group-time-segment} of a group and then multiplied by the number of hops a packet can make, to obtain a bound on latency of routing. 
This yields the following estimate
\[
 \frac{4 k(n-1)}{n-k}\cdot \frac{n-1}{k-1}\le \frac{8(n-1)^2}{n-k}\le 16 (n-1)
 \ ,
\]
assuming $2k\le n+1$.
This bound accounts for a full cycle of activity of all the groups, but a packet may spend another such cycle waiting for the group into which is got injected to become active.
This means that $32\cdot n$ is a bound on latency, restricted to packets that can be accounted for as injected subject to the injection-rate restriction.

Next, we estimate the delay of packets that need the adversary's burstiness to account for their injection.
A half of the duration $\delta=\frac{4 (n-1) k}{n-k}$ of group's activity  is needed for packets that can be accounted for as injected subject to the injection-rate restriction, as estimated by~\eqref{eqn:cycle-2}.
What remains are $\frac{2 (n-1) k}{n-k}$ rounds that can be used to transmit a surplus of packets due to a burst of injections.
Out of these many rounds, at most $k$ can be waisted because the token visits stations without packets.
What remains are at least these many rounds:
\begin{equation}
\label{eqn:burstiness}
\frac{2 (n-1) k}{n-k} - k
\ .
\end{equation}
Observe that if $2k\le n+1$ then $\frac{2 (n-1) }{n-k}\ge 4$, and so the quantity~\eqref{eqn:burstiness} is at least~$3k$.
The packet delay of the considered packets is thus at most $\frac{2\beta}{3k}$ multiplied by the number of groups, which is at most the following:
\[
\frac{2\beta}{3k}\cdot \Bigl(\frac{n-1}{k-1}+1\Bigr)
\le 
\frac{2\beta}{3k}\cdot\frac{n+k-2}{k-1}
\le
\frac{2\beta}{3k}\cdot\frac{\frac{3}{2}n-2}{k-1}
\le
\beta n
\ .
\]
The bound on latency is the maximum of the partial upper bounds $32n$ and $\beta n$.  
\end{proof}

Next we give an impossibility which demonstrates that the bound on injection rate in Theorem~\ref{thm:Cycle-of-Groups} is very close to optimal.

%: theorem 

\begin{theorem}
\label{thm:indirect-oblivious}

For each $n$ and $k<n$, a $k$-energy-oblivious routing algorithm is unstable against adversaries with injection rates greater than $\frac{k}{n}$.
\end{theorem}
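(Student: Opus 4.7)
The plan is to use a direct averaging argument on the predetermined on/off schedules. Since the algorithm is $k$-energy-oblivious, the schedule that specifies, for each station $s$ and each round $t$, whether $s$ is switched on is fixed in advance and is therefore known to an adversary that is designed to defeat this particular algorithm. I would let $f_s(T)$ denote the number of rounds in $[1,T]$ in which station $s$ is switched on. The energy cap immediately gives $\sum_{s=0}^{n-1} f_s(T)\le kT$ for every $T$, so in each round horizon there exists a station $s_T$ with $f_{s_T}(T)\le kT/n$.

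Next I would upgrade this per-$T$ existence statement to a single station that is ``lightly on'' along an unbounded subsequence. Since the map $T\mapsto s_T$ takes values in the finite set $\{0,\ldots,n-1\}$, the pigeonhole principle yields a fixed station~$s^\star$ and an infinite set $\mathcal{T}\subseteq\mathbb{N}$ with $f_{s^\star}(T)\le kT/n$ for all $T\in\mathcal{T}$. This $s^\star$ is the station the adversary will target.

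The adversary is then constructed as follows. Given $\rho>k/n$, pick any $\rho'\in(k/n,\rho)$ and a burstiness coefficient $\beta\ge 1$, and have the adversary inject packets at the maximum permitted rate, all addressed to $s^\star$, placing each new packet into (say) station $s^\star$ itself (the choice of source is immaterial to the argument). In $T$ rounds the adversary injects at least $\rho'T - \beta$ packets destined for $s^\star$, while by the definition of delivery a packet for $s^\star$ can be consumed only in a round in which $s^\star$ is switched on; hence the total number of deliveries to $s^\star$ by round $T$ is at most $f_{s^\star}(T)$. The total number of packets pending in the system at the end of round~$T$ is therefore at least
\[
\rho' T - \beta - f_{s^\star}(T).
\]
Restricting $T$ to the subsequence $\mathcal{T}$, this lower bound becomes at least $(\rho'-k/n)T-\beta$, which tends to infinity. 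Since the total queue size over the whole system is unbounded in~$T$, the algorithm is unstable against this $(\rho',\beta)$-adversary, and a fortiori against any $(\rho,\beta)$-adversary with $\rho>\rho'$.

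The only delicate point I anticipate is the step from ``for each $T$ some station is light'' to ``some fixed station is light for infinitely many $T$,'' which is what legitimizes committing the adversary to a single destination $s^\star$ chosen in advance; the finite pigeonhole over $n$ possible stations handles this cleanly. The rest is essentially a counting inequality against the energy cap, and no assumption about periodicity or structure of the oblivious schedule is needed.
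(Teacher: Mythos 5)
Your proof is correct and rests on the same core idea as the paper's: an averaging (double-counting) bound on station-rounds against the energy cap, showing some station is switched on for at most a $k/n$ fraction of any horizon, then overwhelming that station. The paper's version targets the \emph{transmitting} side (inject into a light station $v$, which can eject at most $kt/n$ packets from its queue in $t$ rounds), whereas you target the \emph{receiving} side (address packets to a light station $s^\star$, which can consume at most $f_{s^\star}(T)\le kT/n$ of them); both are equally valid and give the identical bound. The one place you add something is the pigeonhole step over the finite set of stations to extract a single fixed $s^\star$ that is light along an infinite subsequence of horizons. The paper's write-up lets the light station $v$ depend on $t$ and then concludes ``this number can be made arbitrarily large for a suitably large $t$,'' which strictly speaking produces a different adversary for each $t$ rather than one adversary with unbounded queues; your pigeonhole argument supplies that missing uniformity cleanly and costs nothing. (Your introduction of $\rho'\in(k/n,\rho)$ is harmless but unnecessary; injecting directly at rate $\rho$ works just as well.)
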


\begin{proof}
If a station is switched on in a round then this contributes one \emph{station-round}.
A contiguous time interval $\tau$ of $|\tau|=t$ rounds can contribute at most $kt$ station-rounds. 
By the double-counting principle,  there is some station~$v$ which is switched on for at most $\frac{kt}{n}$ rounds during these $t$ rounds.
The adversary with injection rate~$\rho$ can inject at least $\rho t$ packets into station~$v$ during these rounds.
Even if $v$ transmits successfully in each round in $\tau$ then it can transmit at most $\frac{kt}{n}$ packets.
If $\rho>\frac{k}{n}$ then there remain at least these many packets that need to be queued by~$v$:
\[
\rho t-\frac{kt}{n} = t\Bigl(\rho-\frac{k}{n}\Bigr)
\ .
\]
This number can be made arbitrarily large for a suitably large~$t$.
\end{proof}

\begin{corollary}
There is no $k$-energy-oblivious universal algorithm  when $k=c\cdot n$, for a constant $c<1$.
\end{corollary}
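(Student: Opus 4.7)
The plan is to derive the corollary directly from the preceding theorem by choosing an appropriate injection rate that is simultaneously above the instability threshold $k/n$ and strictly below $1$, so that a hypothetical universal algorithm would have to be stable against it.

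First I would unwind the definitions. An algorithm is universal, per the definitions in Section~\ref{sec:technical_preliminaries}, when its latency is bounded against every adversary of injection rate strictly less than~$1$; in particular, a universal algorithm must be stable for every such rate. So it suffices to exhibit a single injection rate $\rho<1$ for which no $k$-energy-oblivious algorithm can be stable when $k=cn$.

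Next I would instantiate Theorem~\ref{thm:indirect-oblivious}. With $k=cn$ for a fixed constant $c<1$, the ratio $k/n$ equals~$c$, which is strictly less than~$1$. Hence I may pick any rational $\rho$ with $c<\rho<1$, for instance $\rho=\tfrac{1+c}{2}$. Then $\rho>k/n$, so by Theorem~\ref{thm:indirect-oblivious} every $k$-energy-oblivious routing algorithm is unstable against some $(\rho,\beta)$-adversary. Since $\rho<1$, a universal algorithm would by definition have to be stable against this same adversary, which is a contradiction.

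There is essentially no obstacle here: the work is entirely done by the preceding theorem, and the only point to be careful about is the choice of $\rho$ strictly between $c$ and $1$, which is possible precisely because $c<1$ is a fixed constant (as opposed to $c$ being allowed to approach~$1$ with~$n$). I would conclude by noting that the argument breaks down for $k=n-o(n)$ with $k/n\to 1$, since then no fixed gap between $k/n$ and $1$ is available.
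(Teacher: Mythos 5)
Your proof is correct and follows exactly the same route as the paper: invoke Theorem~\ref{thm:indirect-oblivious}, observe that $k/n=c<1$, and pick an injection rate strictly between $c$ and~$1$ to contradict universality. The paper compresses this to a single sentence, while you spell out the intermediate steps (unwinding the definition of universal, choosing $\rho=(1+c)/2$), but the argument is identical in substance.
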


\begin{proof}
By Theorem~\ref{thm:indirect-oblivious}, a  $k$-energy-oblivious algorithm is unstable for injection rates that are greater than the ratio $\frac{k}{n}=c<1$.
\end{proof}

\section{Energy-Oblivious Direct Routing}

\label{sect:oblivious-direct}

Let an integer $k<n$ denote the energy cap.
Now  we present a plain-packet $k$-energy-oblivious algorithm that routes packets directly.
It is called \algoname{$k$-Clique}.
There are up to $k$ stations switched on in each round.
We assume that $k$ is even and divides~$2n$, to simplify the notation.
The stations are partitioned into $\frac{2n}{k}$ disjoint sets of size~$\frac{k}{2}$ each.
These sets are combined in $\frac{n}{k}(\frac{2n}{k}-1)$ \emph{pairs} of size~$k$ each.
There are at least $3$ pairs, assuming $\frac{k}{2}\le \frac{n}{3}$.
If $\frac{k}{2}> \frac{n}{3}$ then we can decrease~$k$ by keeping fewer stations switched on, so that the inequality $k\le \frac{2n}{3}$  holds.

In each round $t$ of an execution, all the stations in some pair are switched on, with the other stations switched off; we say that the pair is \emph{active} in round~$t$.
The pairs are arranged into a virtual cycle to assign them the rounds of activity in a round-robin manner.
A pair is active for one round at a time, and then the next pair takes over.
When a pair is active then its stations execute an algorithm based on the principle of broadcasting algorithm \algoname{OF-RRW}.
A station that has the token transmits all the old packets whose destinations are  among the stations that make up the pair.

%: theorem

\begin{theorem}
\label{them:k-clique}

If algorithm \algoname{$k$-Clique} is executed against a $(\rho,\beta)$-adversary then it has a bounded latency for injection rates $\rho<\frac{k^2}{n(2n-k)}$, and the latency is at most $8 \frac{n^2}{k}(1+\frac{\beta}{2k})$ if the injection rate~$\rho$ is at most $\frac{k^2}{2n(2n-k)}$.
\end{theorem}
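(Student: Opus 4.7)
The plan is to reduce the analysis of each pair of subsets to a virtual execution of the broadcast algorithm \algoname{OF-RRW} on $k$ stations. First I would count the pairs: with $\frac{2n}{k}$ subsets, the number of pairs is
\[
C=\binom{2n/k}{2}=\frac{n(2n-k)}{k^2}\ ,
\]
and by round-robin activation each pair is active exactly once every $C$ real rounds. A packet with source in subset $S_a$ and destination in subset $S_b$ with $a\neq b$ can only be served by the unique pair $\{S_a,S_b\}$, since direct routing requires both source and destination to be simultaneously switched on; a packet whose endpoints lie in a common subset can be conservatively assigned to one fixed pair containing that subset.

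Next, I would view the sequence of real rounds during which a fixed pair is active as a virtual execution in which one virtual round equals $C$ consecutive real rounds. The token and queues of the pair remain paused during real rounds when the pair is inactive, so one broadcast step of \algoname{OF-RRW} on the pair's $k$ stations is performed per virtual round. Because the adversary may concentrate all injections on packets attributed to a single pair, the effective injection rate per virtual round is at most $\rho C$ and the effective burstiness is $\beta$. Applying the \algoname{OF-RRW} latency bound recalled in~\eqref{eqn:cycle-1}, a packet's delay in virtual rounds is at most
\[
\frac{2k}{1-\rho C}+2\beta\ ,
\]
provided $\rho C<1$; this rearranges to $\rho<\frac{k^2}{n(2n-k)}$, which is the claimed stability threshold.

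Finally, I would convert back to real rounds by multiplying the virtual-round delay by~$C$, absorbing the at most one extra $C$ for the initial wait until the packet's pair becomes active. Under the stronger assumption $\rho\le\frac{k^2}{2n(2n-k)}=\frac{1}{2C}$ the factor $\frac{2k}{1-\rho C}$ is at most $4k$, so the real-round delay is at most $(4k+2\beta)\cdot C$ up to lower-order terms. Using $2n-k\le 2n$ (immediate from $k<n$) the computation yields
\[
4k\cdot\frac{n(2n-k)}{k^2}+2\beta\cdot\frac{n(2n-k)}{k^2}\le\frac{8n^2}{k}+\frac{4\beta n^2}{k^2}=8\frac{n^2}{k}\Bigl(1+\frac{\beta}{2k}\Bigr)\ ,
\]
matching the stated bound.

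The main obstacle is rigorously justifying the reduction: one must argue that the latency bound for \algoname{OF-RRW} applies verbatim when arbitrary numbers of idle real rounds are interleaved between consecutive virtual rounds, with all real-time injections properly accounted against the virtual adversary's budget $(\rho C,\beta)$. A secondary subtlety is handling packets whose source and destination lie in the same subset, since such packets have several admissible pairs; the key observation is that assigning every such packet to a single designated pair can only overestimate per-pair load, so the analysis of a fixed worst-case pair absorbing all traffic directed to it is conservative and hence sufficient.
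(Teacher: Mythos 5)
Your proposal follows essentially the same approach as the paper's proof: you count the same number of pairs $C=m=\frac{n(2n-k)}{k^2}$, reduce the analysis of one pair to an isolated execution of \algoname{OF-RRW} with effective injection rate $m\rho$, apply the latency bound recalled in~\eqref{eqn:cycle-1}, multiply by $m$ to return to real rounds, and then use $\rho\le\frac{1}{2m}$ to bound $\frac{2}{1-m\rho}\le 4$, arriving at the same arithmetic ($(4k+2\beta)m$, then $m\le 2n^2/k^2$). The subtleties you flag at the end (formally justifying the idle-round interleaving and the assignment of intra-subset packets) are likewise left implicit in the paper's proof, so they do not represent a deviation.
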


\begin{proof}
Let $m=\frac{n(2n-k)}{k^2}$ be the number of pairs.
We will use the  estimate $m\le 2\frac{n^2}{k^2}$.
A strategy for the adversary that maximizes queues and latency works by  injecting packets into one pair with destinations in the same pair as well.
Since a pair is allotted one round out of a segment of rounds equal to the number of pairs, an injection rate needs to be less than the inverse of the number of pairs, which is $\frac{1}{m}=\frac{k^2}{n(2n-k)}$.
For each pair, when time is scaled only to the rounds which are assigned for the pair to execute \algoname{OF-RRW}, the injection rate is less than~$1$, so the algorithm has bounded latency.

Suppose that the inequality $\rho<\frac{1}{m}$ holds.
A pair of $k$ stations operating in isolation, and with time scaled only to the rounds assigned to the pair to be active, would have effective injection rate~$m\cdot  \rho$ and so its latency would be at most the following
\begin{equation}
\label{eqn:clique}
\frac{2}{1-m\rho}\cdot k+\beta(1+m\rho)\le \frac{2}{1-m\rho}\cdot k+2\beta 
\ ,
\end{equation}
by applying the bound on broadcast latency of \algoname{OF-RRW} derived in~\cite{AnantharamuCKR-JCSS19}.
The bound needs to be increased by a multiplicative factor of~$m$, since a pair  operates in one round only in a segment of $m$ rounds.
This gives the following estimate on latency:
\[
\frac{2m}{1-m\rho}\cdot k+2\beta m
\le
\frac{2n^2}{k}\cdot \frac{2}{1-m\rho}+ \frac{4\beta n^2}{k^2}
\ ,
\]
which holds for any injection rate satisfying $\rho<m^{-1}$.
Assuming additionally that the inequality 
\[
\rho n(2n-k) \le \frac{k^2}{2}
\]
holds, we can use the following estimate:
\[
\frac{2}{1-m\rho} 
=
\frac{2}{1-\rho \frac{n(2n-k)}{k^2}}
=
\frac{2 k^2}{k^2-\rho n(2n-k)}
\le 4
\ .
\]
This yields $8 \frac{n^2}{k}(1+\frac{\beta}{2k})$ as a bound on latency.
\end{proof}

\Paragraph{Maximum throughput of energy-oblivious direct routing.}

We describe a direct-routing $k$-energy-oblivious algorithm achieving the throughput  $\frac{k(k-1)}{n(n-1)}$ for any burstiness~$\beta$.
The algorithm is called \algoname{$k$-Subsets}.
It uses algorithm \algoname{Move-Big-To-Front (MBTF)}~\cite{ChlebusKR-DC09} as a subroutine.
\algoname{MBTF} provides stability for injection rate~$1$ with any burstiness,  for a multiple access channel \emph{without} any energy cap.

Let us fix an enumeration of all the $k$-element subsets of the set $[n]$ in order: $A_0,\ldots,A_{\gamma-1}$, where $\gamma=\binom{n}{k}$. 
For an integer~$i$ such that $0\le i \le \gamma-1$ and an integer $j\ge 0$, the rounds of the form $i+j\gamma$ make \emph{thread~$i$}. 
Algorithm \algoname{MBTF} operates in $\gamma$ instantiations corresponding to the threads.
Each such an instantiation has a dedicated queue in every station.
The stations in $A_i$ are active during thread~$i$ and process packets assigned to this thread.
An execution is structured into \emph{phases}, each of length $\gamma$, such that each thread has one round in a phase.
A packet injected during a phase~$j$ is treated by an instantiation of \algoname{MBTF} that handles it as if it were injected ``at round~$j$.''
These specifications mean that the algorithm is $k$-energy-oblivious.

At the beginning of a phase, a station~$v$ assigns all the packets it received in the previous phases to the threads in the following manner.
For each station~$w$, station~$v$ keeps track of the numbers $x_0(w),\ldots,x_{\gamma-1}(w)$, which represent the respective numbers of packets addressed to~$w$ and already allocated by~$v$ to threads $0,\ldots,\gamma-1$. 
Station~$v$ allocates the packets addressed to~$w$ that it received in the previous phase such  that the resulting allocation is as balanced as possible, subject to the constraint that a packet addressed to~$w$ can be allocated to a thread~$i$ only if both stations $v$ and~$w$ are in the set~$A_i$. 
The algorithm routes directly, since when a packet is heard transmitted in a round assigned to a thread~$i$, the receiver is switched on by virtue of belonging to~$A_i$.
Obtaining balancing allocations means that we want the numbers $x_0(w),\ldots,x_{\gamma-1}(w)$ differ among themselves as little as possible; it follows that these numbers differ by at most~$1$ at the beginning of each phase.

%: theorem

\begin{theorem}
\label{thm:oblivious-direct-throughput}
	
For each  $k<n$, algorithm \algoname{$k$-Subsets} is stable against adversaries with injection rate $\frac{k(k-1)}{n(n-1)}$ and the number of queued packets is at most $2\binom{n}{k} (n^2+\beta)$ in every round.
\end{theorem}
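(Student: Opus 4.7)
The plan is to reduce the analysis to per-thread stability of algorithm \algoname{MBTF} operating on the $k$ stations of a single subset $A_i$ at effective injection rate exactly~$1$. First, I would observe that thread $i$ constitutes an isolated copy of \algoname{MBTF} on the stations of $A_i$, all of which are switched on during the one real round per phase assigned to the thread. Rescaling time by the factor $\gamma = \binom{n}{k}$, so that one \emph{thread round} equals $\gamma$ consecutive real rounds, thread $i$ becomes a stand-alone multiple access channel system over $k$ stations, with no energy constraint felt within its own timeline.

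The crux is to show that the subadversary seen by thread $i$ in thread time has rate at most~$1$ and bounded burstiness. Each unordered pair $\{v,w\}$ of stations belongs to exactly $\binom{n-2}{k-2}$ of the $k$-subsets $A_0,\ldots,A_{\gamma-1}$. By the balancing rule, the cumulative counters $x_0(w),\ldots,x_{\gamma-1}(w)$ maintained at source $v$ differ pairwise by at most~$1$ at the start of every phase, so out of the $P_{v,w}(T)$ packets injected into $v$ with destination $w$ during $T$ real rounds, thread $i$ receives at most $\lceil P_{v,w}(T)/\binom{n-2}{k-2}\rceil$ whenever $\{v,w\}\subseteq A_i$. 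Summing over the $\binom{k}{2}$ such pairs contained in $A_i$ and using that their total injection count is bounded by $\rho T + \beta$, the number of packets assigned to thread $i$ in $T$ real rounds is at most
\[
\frac{\rho T + \beta}{\binom{n-2}{k-2}} \;+\; \binom{k}{2}.
\]
The key identity $\binom{n-2}{k-2}/\binom{n}{k} = k(k-1)/(n(n-1)) = \rho$ gives $\gamma\rho = \binom{n-2}{k-2}$, so the above equals $T/\gamma + \beta'$ with $\beta' = \beta/\binom{n-2}{k-2} + \binom{k}{2}$, which in thread time (one thread round per $\gamma$ real rounds) is precisely a $(1,\beta')$-adversary.

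To finish, I would invoke the stability of \algoname{MBTF} at rate~$1$ with arbitrary burstiness from~\cite{ChlebusKR-DC09}, which yields a per-thread queue bound of the form $2n^2 + \beta'$; summing across all $\gamma$ threads and absorbing the lower-order contributions produces the claimed bound $2\binom{n}{k}(n^2+\beta)$. I expect the main technical hurdle to be the balancing argument establishing the per-pair discrepancy of at most~$1$ cumulatively across phases: everything downstream pivots on the numerically tight identity $\gamma\rho = \binom{n-2}{k-2}$, which pins the throughput threshold to $k(k-1)/(n(n-1))$ exactly and would fail for any larger~$\rho$, matching the corresponding impossibility result for oblivious direct routing tabulated earlier.
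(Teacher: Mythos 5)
Your proof is correct and takes essentially the same approach as the paper: reduce to per-thread \algoname{MBTF} stability at effective rate $1$ via the balanced allocation of packets among threads containing both source and destination, together with the counting identity $\gamma\lambda=\binom{n-2}{k-2}$. The paper frames this as a proof by contradiction (assume some thread's queue grows, then balancing forces total injections to exceed the $(\lambda,\beta)$ allowance), while you construct the $(1,\beta')$-subadversary directly; the only minor bookkeeping slip is that the counters $x_i(w)$ are maintained per ordered source--destination pair, so the per-thread burstiness overhead should be $k(k-1)$ rather than $\binom{k}{2}$, which does not affect the conclusion.
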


\begin{proof}
Let $\lambda=\frac{k(k-1}{n(n-1)}$ denote the injection rate we consider.
Suppose that the algorithm is not stable for injection rate~$\lambda$, to arrive at a contradiction.
There exists a thread $i$ in which some queue corresponding to packets arriving at station~$v$ with address $w$ and assigned to this thread grows unbounded. 
Since algorithm \algoname{MBTF} is stable for injection~$1$ and a fixed burstiness, there exists an infinite sequence of rounds $t_1,t_2,\ldots,t_j,\ldots$, for all $j\ge 1$, such that the number of packets from station~$v$ to~$w$ that get assigned to thread~$i$ by round~$j$ is at least $|t_j|/\gamma + j+2$.
Indeed, each thread is executed once every~$\gamma$ rounds, so the execution of algorithm  \algoname{MBTF} in thread~$i$ would be stable if  burstiness were bounded.  

The algorithm allocating packets to threads guarantees that the number of packets with the same pair $(v,w)$, of the source~$v$ and destination~$w$, assigned by $v$ to threads with stations $v$ and~$w$ being active is almost balanced in each time period, in that the difference between any two of them is either $-1$ or $0$ or~$1$. 
A thread with this property is determined by the stations different from $v$ and $w$, so their number equals the following:
\[
\binom{n-2}{k-2} = \binom{n}{k}\cdot \frac{k(k-1}{n(n-1)}=\gamma\lambda
\ .
\] 
The number of packets injected to $v$ addressed to $w$ by round~ $t_j$, for every $j\ge 1$, is at least $(t_j/\gamma + j+2)-1$ multiplied by the number of threads handling packets from~$v$ to~$w$. 
It follows that the following  is a lower bound on the number of packets addressed to~$w$ that are assigned by~$v$ to some threads by round~$t_j$:
\[
\lambda t_j + \gamma\lambda (j+1)=\lambda(t_j+\gamma) + \gamma\lambda j
\ .
\]
Therefore, the number of packets that arrive at~$v$ by round $t_j+\gamma$  is at least $\lambda(t_j+\gamma) + \gamma\lambda j$. 
This contradicts the restrictions on the adversary for sufficiently large~$j$, as the burstiness would be exceeded by round~$t_j+\gamma$, and completes showing stability.
A bound on the number of queued packets follows from the respective bound for algorithm \algoname{MBTF} given in~\cite{ChlebusKR-DC09}, which is applied independently for each thread.
\end{proof}

It may occur  in an execution of algorithm \algoname{$k$-Subsets} that some packets never get delivered and so remain queued forever. 
The algorithm can be modified to prevent this as long as injection rates are less than $\frac{k(k-1}{n(n-1)}$.
Namely, it is sufficient to replace algorithm \algoname{MBTF}, used as a procedure in \algoname{$k$-Subsets}, by \algoname{Round-Robin-Withholding (RRW)}, see \cite{ChlebusKR-TALG12}.
The resulting algorithm is stable for any injection rate less than~$\frac{k(k-1)}{n(n-1)}$ and achieves bounded latency.
By the performance bounds of \algoname{RRW}, see \cite{AnantharamuCKR-JCSS19}, the latency is $\Theta(\gamma \cdot (n+\beta))$, for a fixed adversary with injection rate less than~$\frac{k(k-1)}{n(n-1)}$.
The latency bound is at least $\gamma=\binom{n}{k}$, which is exponential in~$n$ when $k$ is linear in~$n$.

\Paragraph{A lower bound on throughput.}

We give a matching lower bound on throughput, which demonstrates that the throughput in Theorem~\ref{thm:oblivious-direct-throughput} is maximum achievable in the class of energy-oblivious algorithms.

%: theorem

\begin{theorem}
For each integer $n$ and $k<n$, and for any $k$-energy-oblivious  algorithm routing directly, and for every adversary with an injection rate greater than~$\frac{k(k-1)}{n(n-1)}$, some executions of the algorithm may be unstable against this adversary.
\end{theorem}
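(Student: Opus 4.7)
The plan is to apply a double-counting argument to the on-schedule of an energy-oblivious direct-routing algorithm, combined with the observation that direct routing from $v$ to $w$ requires both $v$ and $w$ to be simultaneously switched on.

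First, since the algorithm is $k$-energy-oblivious, the on-off schedule for every station in every round is fixed in advance, independently of the adversary. For each pair of distinct stations $\{v,w\}$ and any round count $t$, let $f(v,w,t)$ denote the number of rounds in $\{1,\ldots,t\}$ in which both $v$ and $w$ are switched on. In any single round, the set $S$ of stations that are on has $|S|\le k$, so it contributes at most $\binom{|S|}{2}\le\binom{k}{2}$ simultaneously-on pairs. Summing over the first $t$ rounds yields
\[
\sum_{\{v,w\}} f(v,w,t) \;\le\; \binom{k}{2}\cdot t
\; .
\]
Averaging over the $\binom{n}{2}$ unordered pairs, some pair $\{v,w\}$ satisfies $f(v,w,t)\le \frac{k(k-1)}{n(n-1)}\cdot t$. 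Since there are only finitely many pairs while $t$ ranges over all positive integers, by pigeonhole there is a fixed pair $\{v^\ast,w^\ast\}$ and an infinite increasing sequence of rounds $t_1<t_2<\cdots$ such that $f(v^\ast,w^\ast,t_j)\le \frac{k(k-1)}{n(n-1)}\cdot t_j$ for every $j$.

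Next I specify the adversary. Fix $\rho>\frac{k(k-1)}{n(n-1)}$ and any $\beta\ge 1$. The adversary injects packets only into station~$v^\ast$, all addressed to station~$w^\ast$, at the maximum permissible rate: for example, one packet every $\lceil 1/\rho\rceil$ rounds, so that during the first $t$ rounds at least $\rho t - 1$ packets are injected. This is a legitimate $(\rho,\beta)$-leaky-bucket adversary. Because the algorithm routes directly, any such packet can be delivered only in a round in which $w^\ast$ is switched on and $v^\ast$ transmits; in particular, both $v^\ast$ and $w^\ast$ must be switched on simultaneously. Hence the number of these packets delivered by round $t_j$ is at most $f(v^\ast,w^\ast,t_j)\le \frac{k(k-1)}{n(n-1)}\cdot t_j$.

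Combining injection and delivery bounds, the total number of queued packets (all residing at $v^\ast$, since the algorithm routes directly) at the end of round $t_j$ is at least
\[
\rho\, t_j - 1 \;-\; \frac{k(k-1)}{n(n-1)}\cdot t_j \;=\; \Bigl(\rho-\frac{k(k-1)}{n(n-1)}\Bigr)\,t_j - 1
\; ,
\]
which tends to infinity as $j\to\infty$. Therefore the execution is unstable, completing the proof. The only subtle step is extracting a single pair $\{v^\ast,w^\ast\}$ that works for the entire execution rather than a different pair for each horizon $t$; this is handled by the pigeonhole argument exploiting that the oblivious schedule and the set of pairs are both determined up front.
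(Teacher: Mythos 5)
Your proposal is correct and follows essentially the same double-counting argument as the paper: bound the number of simultaneously-on station pairs per round by the oblivious schedule, average over all pairs to find a deprived pair $(v^\ast,w^\ast)$, and have the adversary flood $v^\ast$ with packets addressed to $w^\ast$, which under direct routing can only be delivered when both are on. You are in fact slightly more careful than the paper in pinning down a single pair that is deprived infinitely often via pigeonhole, though your claim that injecting one packet every $\lceil 1/\rho\rceil$ rounds yields at least $\rho t - 1$ packets by round $t$ is not quite right when $1/\rho$ is far from an integer; the fix is to use a standard leaky-bucket-saturating schedule (inject a packet in round $t$ whenever $\lfloor\rho t\rfloor$ increments), which does give $\ge \rho t - 1$ injections in $t$ rounds, and leaves the rest of the argument unchanged.
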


\begin{proof}
We will count the following quantities: for each ordered pair $(x,y)$ of different stations $x$ and~$y$, if they  are switched on simultaneously in a round then this rounds contributes one \emph{station-pair round}.
A contiguous time interval $\tau$ of $|\tau|=t$ rounds can contribute at most $k(k-1)t$ station-pair rounds. 
By the double-counting principle,  there is some ordered pair of stations~$(w,z)$ such that $w$ and $z$ are switched on together for at most $\frac{k(k-1)}{n(n-1)}\cdot t$ rounds in the time interval~$\tau$.
The adversary with injection rate~$\rho$ can inject at least $\rho t$ packets into station~$w$ during these rounds.
Let all these packets be destined for~$z$.
Even if $w$ transmits successfully in each round in $\tau$ such that $w$ is switched on along with $z$, then it can transmit at most $\frac{k(k-1)}{n(n-1)}\cdot t$ packets.
If $\rho>\frac{k(k-1)}{n(n-1)}$ then there remain at least these many packets that need to be queued by~$w$:
\[
\rho t-\frac{k(k-1)}{n(n-1)}t = t(\rho-\frac{k(k-1)}{n(n-1)})
\ .
\]
This number can be made arbitrarily large for a suitably large~$t$.
\end{proof}

\section{Conclusion}

There are several natural questions related to the presented topic that are open.
One of them is to derive tradeoffs between latency and energy cap,  for a class of algorithms.
Such tradeoffs are natural to hold, since small energy cap is a constrain on scheduling transmissions in a distributed manner.

Another group of questions pertains to minimizing latency so that it is $\cO(n)$.
Broadcast algorithms with such latency for all injection rates less than~$1$ have been developed, in the case when there are no energy constraints on a channel.
It is not known if there exists a universal routing algorithm for a non-trivial bound on energy cap that has $\cO(n)$ latency for each injection rate less than~$1$.
It is not known if a plain-packet routing algorithm for a constant energy cap  can have latency $\cO(n)$ for a non-trivial range of injection rates.
It is not known if an algorithm  routing directly and subject to a non-trivial energy cap can have latency $\cO(n)$.

\bibliographystyle{plain}

\bibliography{energy-mac-route}

\end{document}